\theoremstyle{thmstyleone}%
\newtheorem{theorem}{Theorem}%  meant for continuous numbers
\newtheorem{condition}{Condition}[section]
\newtheorem{lemma}{Lemma}[section]
\newtheorem{notation}{Notation}[section]
\newtheorem{assumption}{Assumption}[section]
\theoremstyle{thmstyletwo}%
\theoremstyle{thmstylethree}%
\newtheorem{definition}{Definition}%
\begin{document}

\title[Article Title]{Robust Fused Lasso Penalized Huber Regression with Nonasymptotic Property and Implementation Studies}

%%=============================================================%%
%% Prefix	-> \pfx{Dr}
%% GivenName	-> \fnm{Joergen W.}
%% Particle	-> \spfx{van der} -> surname prefix
%% FamilyName	-> \sur{Ploeg}
%% Suffix	-> \sfx{IV}
%% NatureName	-> \tanm{Poet Laureate} -> Title after name
%% Degrees	-> \dgr{MSc, PhD}
%% \author*[1,2]{\pfx{Dr} \fnm{Joergen W.} \spfx{van der} \sur{Ploeg} \sfx{IV} \tanm{Poet Laureate} 
%%                 \dgr{MSc, PhD}}\email{iauthor@gmail.com}
%%=============================================================%%

\author[1]{\fnm{Xin} \sur{Xin}}\email{xinxin@henu.edu.cn}
\equalcont{These authors contributed equally to this work.}

\author[1]{\fnm{Boyi} \sur{Xie}}\email{byxiemath@163.com}
\equalcont{These authors contributed equally to this work.}

\author*[2]{\fnm{Yunhai}\sur{Xiao}}\email{yhxiao@henu.edu.cn}

\affil[1]{\orgdiv{School of Mathematics and Statistics}, \orgname{Henan University}, \orgaddress{\city{Kaifeng} \postcode{475000}, \country{China}}}

\affil*[2]{\orgdiv{Center for Applied Mathematics of Henan Province}, \orgname{Henan University}, \orgaddress{\city{Zhengzhou} \postcode{450046}, \country{China}}}

\abstract{
For some special data in reality, such as the genetic data, adjacent genes may have the similar function. Thus ensuring the smoothness between adjacent genes is highly necessary. But, in this case, the standard lasso penalty just doesn't seem appropriate anymore.
On the other hand, in high-dimensional statistics, some datasets are easily contaminated by outliers or contain variables with heavy-tailed distributions, which makes many conventional methods inadequate.  
To address both issues, in this paper, we propose an adaptive Huber regression for robust estimation and inference, in which, the fused lasso penalty is used to encourage the sparsity of the coefficients as well as the sparsity of their differences, i.e., local constancy of the coefficient profile.
Theoretically, we establish its nonasymptotic estimation error bounds under $\ell_2$-norm in  high-dimensional setting.
The proposed estimation method is formulated as a convex, nonsmooth and separable optimization problem, hence,  the alternating direction method of multipliers can be employed.
In the end, we perform on simulation studies and real cancer data studies, which illustrate that the proposed estimation method is more robust and predictive.
}

\keywords{Adaptive Huber regression, fused lasso, nonasymptotic consistency, alternating direction method of multipliers, global convergence}

%%\pacs[JEL Classification]{D8, H51}

%%\pacs[MSC Classification]{35A01, 65L10, 65L12, 65L20, 65L70}

\maketitle
\section{Introduction}\label{sec1}
%从实际背景出发引出fused lasso模型
Data with ordered structures in many fields such as high-dimensional biomedical research \cite{li2018efficiently}, signal shapes \cite{li2020linearized}, and air quality analysis \cite{degras2021sparse}, have emerged a series of new challenges both computationally and statistically.
To deal with the ordered data, Tibshirani \cite{tibshirani2005sparsity} proposed a novel fused lasso regression which imposed not only on the variable coefficients, like lasso, but also on the consecutive differences of variable coefficients based on the assumed order of feature variables.
Since then, the fused lasso has attained a lot of research activities. For instance, Petersen et al. \cite{petersen2016fused} proposed a fused lasso additive model, in which each additive function is estimated to be piecewise constant.
Mao et al. \cite{mao2021robust} incorporated temporal prior information to the missing traffic data and then used the fused lasso regularization to fit the temporal correlation of traffic data.
Corsaro et al. \cite{corsaro2021fused} presented a model based on a fused lasso approach for the multi-period portfolio selection problem. Besides, the fused lasso is applied to portfolio weights, which encourages sparse solutions and becomes a penalization on the difference of wealth allocated across the assets between rebalancing dates.
Cui et al. \cite{cui2021fused} used a fused lasso framework to features reordering on the basis of their relevance with respect to the target feature. It enhanced the trade-off between the relevancy of each individual feature on the one hand and the redundancy between pairwise features on the other.
Degras et al. \cite{degras2021sparse} introduced a  fused lasso for segmenting models with multivariate time series.

%fusedlasso相关算法的进展情况及存在的问题
Associated with the characteristics of fused lasso,  many estimation methods have been proposed, analyzed, and implemented.
For the fused lasso penalized least square model of Tibshirani et al. \cite{tibshirani2005sparsity}, if the tuning parameters are fixed, the solution will correspond to a quadratic programming problem. Hence, a two-stage dynamic algorithm named SQOPT was specifically designed. This method transformed variables to the form of reduction between positive and negative parts because of the absolute value. However, the process of finding a solution is relatively complex and the computing time is also too long, especially when the variable dimension is large. 
Li et al. \cite{li2014linearized} focused on fused lasso model and applied a well-known linearized alternating direction method of multipliers (ADMM). But for large or even huge scale datasets, they believed that the customizing advanced operator splitting type methods may be more appropriate.
Wang et al. \cite{wang2016fused} developed a new method of fused lasso with the adaptation of parameter ordering to scrutinize only adjacent-pair parameter's differences, which leads to a substantial reduction for the number of involved constraints. However, this method may be challenged by the increased computational complexity with respect to the underlying pattern of homogeneous parameters. 
Li et al. \cite{li2018efficiently} proposed a highly efficient inexact semi-smooth Newton based augmented Lagrangian method for solving challenging large-scale fused lasso problems. But the Newton method usually requires more strict conditions on functions' properties. 

%这里，应该说，上述方法都没有讨论，当数据有厚尾情形的处理。然后说到Huber
Nevertheless, all these methods reviewed above ignored the case of the data being heavy-tailed. It was recently shown that, in this heavy-tailed case, using the Huber function loss instead of the least square is more appropriate. For example, Sun et al. \cite{sun2020adaptive} proposed an adaptive Huber regression with an adaptive robustification parameter which has the ability to adopt the sample size, dimension, and moments of the random noise. We note that Huber loss function with a robust parameter is easily computed and its asymptotic properties have been well studied. For instance, Huang \& Wu \cite{huang2021robust} adopted a pairwise Huber loss and then applied it in the situation when the noise only satisfies a weak moment condition.
In their work, a comparison theorem to characterize the gap between the excess generalization error and the prediction error was established. 
Liu et al. \cite{liu2021degrees} used a Huber loss function combining with a generalized lasso penalty to achieve robustness in estimation and variable selection. But they mainly focused on the formula of degrees of freedom that is used in information criteria for model selection.

%huber相关算法的进展情况及存在的问题
On the numerical implementation progress, Sun et al. \cite{sun2020adaptive} solved the lasso penalized Huber regression by using the local iterative adaptive minimization algorithm of Fan et al. \cite{fan2018lamm}. This method could control the accuracy and statistical error while fitting the high-dimensional model, but it could not handle the data with more complex structures.
Meanwhile, Chen et al. \cite{chen2020low} designed an accelerated proximal gradient algorithm to solve the matrix elastic-net regularized multivariate Huber regression model. Although this model could reduce the negative effect of outliers on estimators, it was not able to resistant the outliers well. 
Hence, other modified robust loss functions deserve further investigations. 
Luo et al. \cite{luo2022distributed} proposed a robust distributed algorithm for fitting linear regressions when the data contains heavy-tailed or asymmetric errors with finite second moments. This procedure employed Barzilai-Borwein gradient descent and locally adaptive majorize-minimization, in low- and high-dimensional settings, respectively.
Ghosh et al. \cite{ghosh2016robust} proposed a super-resolution algorithm using a Huber norm-based maximum likelihood estimation by combining with an adaptive directional Huber-Markov regularization. This algorithm is simple and can obtain solutions but the wide-angle image with high resolution are required.

%提出我们的模型，那我们有一些什么好方法呢？
Inspired by the aforementioned works, in this paper, we propose a novel fused lasso penalized adaptive Huber regression model. We show that this estimation method can not only deal with heavy-tailed problem, but also can guarantee the smooth structures of the features.
A nature question is: can this model give a good estimator which has a smooth and sparse property?
To answer this question, we focus on an ADMM algorithm because of its widely applications in various fields, such as Xiao et al. \cite{xiao2013splitting}, Jiao et al. \cite{jiao2016alternating}. We should emphasize that ADMM has been illustrated numerically that it is highly efficient for minimization problems with separable structures in both objective function and constraints.
Another attractive feature is that the ADMM's convergence can be followed directly from some well-known convergence result according to the classical $2$-block semi-proximal ADMM by Fazel et al.\ cite{fazel2013hankel}. 

%本文的亮点
Based on the issues mentioned above, this paper aims to handle the smoothness and outliers in multivariate linear regression,
that is, we establish a fused lasso penalized adaptive Huber regression model. We show that this model possesses at least two advantages: (i) adjacent variables tend to be smooth, and (ii) the negative effect of outliers reduced.
Besides, an efficient convergent ADMM algorithm is proposed.
To the best of our knowledge, this is the first time to consider the case of dealing with smoothness and outliers in multivariate data set with heavy-tailed data.

The rest of the paper proceeds as follows. In Section \ref{sec2},  we quickly review the Huber loss and robustication parameter, followed by the proposal of fused lasso penalized adaptive Huber regression model. In Section \ref{sec3}, we sharply characterize the nonasymptotic performance of the proposed estimators in high dimension. We describe an implementation algorithm in Section \ref{sec4}. Subsequently, Section \ref{sec5} and Section \ref{sec6} are devoted to simulation studies and real data studies, respectively. In Section \ref{sec7}, we conclude this paper with some remarks.

\begin{notation}
For any multivariate $\boldsymbol u = (u_1, u_2, \ldots, u_n)^\top \in \mathbb{R}^n$, we let $\|\boldsymbol u\|_p := (\sum_{i=1}^{n} \mid u_i\mid ^p)^{1/p}$ be the $\ell_p$-norm, and specially, $\|\boldsymbol u\|_{\infty} = \max_{1\leq i\leq n} \mid u_i\mid$.
For any two multivariate $\boldsymbol u$ and $\boldsymbol v$, let $\left <\boldsymbol u, \boldsymbol v\right > = \boldsymbol u^\top \boldsymbol v$, and $\boldsymbol u \odot \boldsymbol v$ be the Hadamard (entry-wise) product of $\boldsymbol u$ and $\boldsymbol v$.
For two sequences of real numbers $\{a_n\}_{n\geq 1}$ and $\{b_n\}_{n\geq 1}$, we use $a_n \lesssim b_n$ to denote $a_n \leq C_n b_n$ for a constant $C_n > 0$. 
For a linear map $\mathcal{A}: \mathbb{R}^p \rightarrow \mathbb{R}^q$, $\mathcal{A}^*$ denotes its adjoint operator. Similarly, we denote the identity map by $\mathcal{I}$, or $\boldsymbol I$ in matrix case.
\end{notation}

%%%%%%%%%%%%%%%%%%%%%%%%%%%%%%%%%%%%%%%%%%%%%%%%%%%%%%%%%%%%%%%%%%%%%%%%%%%%%%%%%%%%%%%%%%%%%%%%%%%%
\section{Estimation Method}\label{sec2}
In this section, we begin with the linear regression model
\begin{equation}\label{mod}
\boldsymbol y = \boldsymbol X\boldsymbol \beta + \boldsymbol \varepsilon,
\end{equation}
where 
$\boldsymbol X = (\boldsymbol x_1, \boldsymbol x_2,\ldots,\boldsymbol x_p) \in \mathbb{R}^{n\times p}$ denotes a predictor matrix,
$\boldsymbol y = (y_1,y_2,\ldots,y_n)^\top\in\mathbb{R}^{n}$ be a response vector,
$\boldsymbol \beta = (\beta_1,\beta_2,\ldots,\beta_p)^\top \in \mathbb{R}^{p}$ be an unknown parameter vector,
$\boldsymbol \varepsilon=(\varepsilon_1,\varepsilon_2,\ldots,\varepsilon_n)^\top \in \mathbb{R}^{n}$ be a random error vector with $\varepsilon_i$ being assumed independent and identically-distributed (i.i.d.) with mean $0$ and variance $\sigma_{\varepsilon}^2$,
$n$ be the sample size and $p$ be the number of predictors.
A common assumption is that the true coefficient vector $\boldsymbol \beta^*$ is sparse which guarantees the model identifiability and enhances the model fitting accuracy and interpretability.

As discussed previously,  we will use a Huber loss function for a robust estimator of  $\boldsymbol \beta$ in the model (\ref{mod}).
First of all, we review an univariate huber function for a given positive scalar $\tau > 0$, that is
\begin{equation*}
h_{\tau}(x)=\left\{
\begin{aligned}
&\frac{1}{2}x^2, &&{\text{if}~\mid x\mid\leq\tau},\\
&\tau\mid x\mid - \frac{1}{2}\tau^2, &&{\text{if}~\mid x\mid >\tau}.
\end{aligned}
\right.
\end{equation*}
The scalar $\tau$ can be viewed as a shape parameter for controlling the amount of robustness. The Huber’s criterion is similar to least square for a larger $\tau$ while it becomes more similar to LAD criterion for a smaller $\tau$. In \cite{huber1981robust}, Huber fixed $\tau= 1.345$ to get $95\%$ efficiency at $\mathcal{N}(0, 1)$. In practice, an optimal $\tau$ can be determined by cross validation or on the basis of independent validation data sets.

For model (\ref{mod}), the multivariate Huber function is defined as the mean of the Huber function defined on component wise, that is,
$$
\mathcal{L}_{\tau}(\boldsymbol x)=\frac{1}{n}\sum_{i=1}^{n}h_{\tau}(x_i).
$$
By using this Huber function, we propose the following fused lasso penalized adaptive Huber regression model
\begin{equation}\label{admod}
\min\limits_{\boldsymbol \beta \in \mathbb{R}^p} 
\mathcal{L}_{\tau}(\boldsymbol y - \boldsymbol X\boldsymbol \beta) + \lambda_1\|\boldsymbol \beta\|_1 + \lambda_2\sum_{j=2}^{p}\mid\beta_{j} - \beta_{j-1}\mid,
\end{equation}
where $\lambda_1>0$ and $\lambda_2>0$ are tuning parameters.
As usual, we call $\lambda_1\|\boldsymbol \beta\|_1 + \lambda_2\sum_{j=2}^{p}\mid\beta_{j} - \beta_{j-1}\mid$ the fused lasso penalty term. The most important merit of using fused lasso penalty term is that it has the ability to posses the smoothness property as well as variable selection. It is known that the $\ell_1$-norm is a convex relaxation of $\ell_0$-norm so that it can induce a sparse estimator. For convenience, we simplify the term $\sum_{j=2}^{p}\mid\beta_{j} - \beta_{j-1}\mid$ as $\|\boldsymbol D \boldsymbol \beta\|_1$, where
$$
\boldsymbol D=
\begin{bmatrix}
-1  &   1    & 0       & \cdots\ & 0\\
\vdots  & \vdots & \ddots  & \ddots  & \vdots\\
0   &  -1    & \cdots\ & 1       &0 \\
0   &   0    & \cdots\ & -1      & 1 \\
\end{bmatrix}
$$
is named a difference operator matrix.

As we know, there are many efficient numerical approaches that can be employed to solve (\ref{admod}) in the special case that $\tau$ is sufficiently large, i.e., the least square loss case. However,  numerical algorithms for this Huber loss in the form of (\ref{admod}) have never been investigated. Hence, developing an efficient and robust numerical algorithm for solving (\ref{admod}) becomes vitally important.

For the convenience of the later developments, we write (\ref{admod}) equivalently as the following model by using a triple of auxiliary variables, that is 
\begin{equation}\label{admod1}
	\begin{array}{rl}
\min\limits_{\boldsymbol z\in \mathbb{R}^n,\boldsymbol \alpha,\boldsymbol \beta,\boldsymbol \gamma\in \mathbb{R}^p}  & 
\mathcal{L}_{\tau}(\boldsymbol y - \boldsymbol z) + \lambda_1\|\boldsymbol \alpha\|_{1} + \lambda_2\|\boldsymbol \gamma\|_{1}\\ 
\text{s.t.}  &  \begin{pmatrix} \boldsymbol z\\ \boldsymbol \alpha\\ \boldsymbol \gamma \end{pmatrix} = 
\begin{pmatrix} \boldsymbol X\\ \boldsymbol I\\ \boldsymbol D \end{pmatrix} \boldsymbol \beta,
\end{array}
\end{equation}
where $\boldsymbol I$ is a $p$-order identity matrix.
For notational convenience, we then rewrite  (\ref{admod1}) as follows:
\begin{equation}\label{admod2}
\begin{array}{rl}
\min\limits_{\boldsymbol z\in \mathbb{R}^n,\boldsymbol \alpha,\boldsymbol \beta,\boldsymbol \gamma \in \mathbb{R}^p}
& \mathcal{L}_{\tau}(\boldsymbol y - \boldsymbol z) + \lambda_1\|\boldsymbol \alpha\|_{1} + \lambda_2\|\boldsymbol \gamma\|_1\\[2mm]
\text{s.t.}& \boldsymbol \theta = \tilde{\boldsymbol X}\boldsymbol \beta ,
\end{array} 
\end{equation}
where $\boldsymbol \theta := (\boldsymbol z~\boldsymbol \alpha~\boldsymbol \gamma)^\top$ and $\tilde{\boldsymbol X} := (\boldsymbol X~\boldsymbol I~\boldsymbol D)^\top$.
The implementation algorithm for solving (\ref{admod2}) will be described in Section \ref{sec4}.

%%%%%%%%%%%%%%%%%%%%%%%%%%%%%%%%%%%%%%%%%%%%%%%%%%%%%%%%%%%%%%%%%%%%%%%%%%%%%%%%%%%%%%%%%%%%%%%%%%%%%%%%%%
\section{Statistical Theory}\label{sec3}

This section is devoted to the statistical theory of the estimation method (\ref{admod}).
For convenience, we define an empirical loss function   $\mathcal{L}_\tau(\boldsymbol \beta):= \frac{1}{n}\sum_{i=1}^{n}h_{\tau}(y_i - \boldsymbol x_i^\top \boldsymbol \beta)$, and then denote
$$
\hat{\boldsymbol \beta} \in \arg\min\limits_{\boldsymbol \beta \in \mathbb{R}^p} 
\mathcal{L}_{\tau}(\boldsymbol \beta) + \lambda_1\|\boldsymbol \beta\|_1 + \lambda_2\sum_{j=2}^{p}\mid\beta_{j} - \beta_{j-1}\mid.
$$
Besides, we use $\boldsymbol \beta^*$ to denote the ground truth. Let
$\mathcal{S} := supp(\boldsymbol \beta^*) \subseteq \{1, 2, \ldots, p\}$ be the true support set and let $\mid\mathcal{S}\mid = s$.

For characterizing the heavy-tailed random noise, we impose a bounded moment condition. Actually, this condition is  a relaxation of the commonly used sub-Gaussian assumption described as follows:
\begin{condition}[Bounded Moment Condition]\label{con1}
	 For $\delta > 0$, each element of the random error $\boldsymbol \varepsilon$ has a bounded (1+$\delta$)-th moment, that is,
	$$
	v_{\delta} = \max\limits_{i}\mathbb{E}(\mid\varepsilon_i\mid^{1+\delta}) < \infty .
	$$
\end{condition}

Let $\boldsymbol H_{\tau}(\boldsymbol \beta) := \nabla^2 \mathcal{L}_\tau(\boldsymbol \beta)$ be the Hessian matrix of the empirical loss function   $\mathcal{L}_\tau(\boldsymbol \beta)$. 
Let $\boldsymbol S_n = \frac 1n \sum_{i=1}^{n} \boldsymbol x_i \boldsymbol x_i^\top$ be the empirical Gram matrix and it is assumed to be nonsingular throughout this paper.
To establish the optimal statistical results in a general framework, we  introduce a localized version of the restricted eigenvalue conditions which can be considered as a modification  of Fan et al. \cite{fan2018lamm}.

\begin{definition}[Localized Restricted Eigenvalue; LRE] \label{def_LRE}
	The localized maximum and minimum eigenvalues of $\boldsymbol H_\tau$ is defined, respectively, as
	$$
	\begin{aligned}
	&\kappa_+(m,c_0,r) \\
	:=& \sup\limits_{\boldsymbol u, \boldsymbol \beta} \Big\{ \frac{\left< \boldsymbol u, \boldsymbol H_\tau(\boldsymbol \beta)\boldsymbol u \right>}{\|\boldsymbol u\|_2^2}:(\boldsymbol u, \boldsymbol \beta) \in \mathcal{C}(m,c_0,r) \Big\},
    \end{aligned}
$$
and
$$
\begin{aligned}
	&\kappa_-(m,c_0,r) \\
	:=& \inf\limits_{\boldsymbol u, \boldsymbol \beta} \Big\{ \frac{\left< \boldsymbol u, \boldsymbol H_\tau(\boldsymbol \beta)\boldsymbol u \right>}{\|\boldsymbol u\|_2^2}:(\boldsymbol u, \boldsymbol \beta) \in \mathcal{C}(m,c_0,r) \Big\},
	\end{aligned}
	$$
	where 
	$$
	\mathcal{C}(m,c_0,r) 
	:=      \left\{
	\begin{array}{c}
	(\boldsymbol u, \boldsymbol \beta) \in \mathbb{R}^p \times \mathbb{R}^p 
	: \forall J \subseteq \{ 1, \ldots, p\} \\
	 such \ that \  \mathcal S \subseteq J, \mid J\mid \leq m, \\
	\|\boldsymbol u_{J^c}\|_1 \leq c_0 \|\boldsymbol u_{J}\|_1, 
	\|\boldsymbol \beta - \boldsymbol \beta^*\|_1 \leq r
	\end{array}
	\right\}
	$$ 
	is a local $\ell_1$ cone.
\end{definition} 

\begin{condition}\label{con2}
	$\boldsymbol H_\tau(\boldsymbol \beta)$ satisfies the localized restricted eigenvalue condition $LRE(m,c_0, r)$, that is, 
	$$
	\kappa_{low} 
	\leq \kappa_-(m,c_0,r) 
	\leq \kappa_+(m,c_0,r) 
	\leq \kappa_{up}
	$$
	for some constants $\kappa_{low}>0$ and $\kappa_{up} > 0$.
\end{condition}

This condition is an unify to study generalized loss functions, whose Hessian may possibly depend on $\boldsymbol \beta$. 
Instead of $\boldsymbol H_\tau$, in the following definition, we turn to the restricted eigenvalues of $\boldsymbol S_n$ which is independent of $\boldsymbol \beta$.

\begin{definition}[Restricted Eigenvalue; RE]\label{def_RE}
    The restricted maximum and minimum eigenvalues of $\boldsymbol S_n$ are defined, respectively, as
	$$
	\rho_+(m,c_0) 
	:= \sup\limits_{\boldsymbol u} \Big\{ \frac{\left< \boldsymbol u, \boldsymbol S_n\boldsymbol u \right>}{\|\boldsymbol u\|_2^2}:\boldsymbol u \in \mathcal{C}(m,c_0) \Big\},
	$$
	and
	$$
	\rho_-(m,c_0) 
	:= \inf\limits_{\boldsymbol u} \Big\{ \frac{\left< \boldsymbol u, \boldsymbol S_n\boldsymbol u \right>}{\|\boldsymbol u\|_2^2}:\boldsymbol u \in \mathcal{C}(m,c_0) \Big\},
	$$
	where 
	\begin{align*}
	&\mathcal{C}(m,c_0) \\
	:=& \left\{\boldsymbol u \in \mathbb{R}^p :
	\begin{array}{c}
	 \forall J \subseteq \{ 1, \ldots, p\} \ such \ that\\
	  \mathcal S \subseteq J,\mid J\mid \leq m, 
	\|\boldsymbol u_{J^c}\|_1 \leq c_0 \|\boldsymbol u_{J}\|_1
	\end{array}
	\right\}.
	\end{align*}
\end{definition} 

\begin{condition}\label{con3}
	$\boldsymbol S_n$ satisfies the restricted eigenvalue condition $RE(m,c_0)$, that is, 
	$$
	\kappa_{low} \leq \rho_-(m,c_0) \leq \rho_+(m,c_0) \leq \kappa_{up}
	$$
	for some constants $\kappa_{low}>0$ and $\kappa_{up} > 0$.
\end{condition}

This condition is often used in high-dimensional nonasymptotic analysis. 
Furthermore, we can show that the localized restricted eigenvalues Condition \ref{con2} holds with high probability under the restricted eigenvalues Condition \ref{con3}. 
The result reported in the following lemma shows that a proper bound for the localized restricted eigenvalues of $\boldsymbol H_\tau$ can be obtained with a high probability under some conditions on the robustification parameter $\tau$ and sample size $n$. 

\begin{lemma}\label{lem1}
	Consider $\boldsymbol \beta \in \mathcal{C}(m,c_0, r)$ where $\mathcal{C}(m, c_0, r)$ is the local $\ell_1$ cone defined in Definition \ref{def_LRE}.
	Let $\tau \geq \max \{8r, c_1(m v_\delta)^{1/(1+\delta)}\}$ and $n \geq c_2 m^2t$ where $c_1, c_2 > 0$ are sufficient large constants only depending on $c_0$ and $\kappa_{low}$.
	Under Conditions \ref{con1} and \ref{con3}, there exist constants $\kappa_{low}$ and $\kappa_{up}$ such that the localized restricted eigenvalue of $\boldsymbol H_\tau (\boldsymbol \beta)$ satisfy
	$$
	0 < \kappa_{low}/2 
	\leq \kappa_-(m,c_0,r) 
	\leq \kappa_+(m,c_0,r) 
	\leq \kappa_{up} < \infty
	$$
	with probability at least $1 - e^{-t}$.
\end{lemma}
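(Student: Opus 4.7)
The plan is to work directly with the explicit form of the Hessian
\begin{equation*}
\boldsymbol H_\tau(\boldsymbol \beta) = \frac{1}{n}\sum_{i=1}^n \mathbb{1}\bigl(|y_i-\boldsymbol x_i^\top\boldsymbol \beta|\leq\tau\bigr)\,\boldsymbol x_i\boldsymbol x_i^\top,
\end{equation*}
which comes from $h_\tau''(x)=\mathbb{1}(|x|\le\tau)$ away from the two kink points. The upper bound is essentially free: since the indicator is bounded by $1$, for any $(\boldsymbol u,\boldsymbol \beta)\in\mathcal{C}(m,c_0,r)$,
\begin{equation*}
\langle \boldsymbol u,\boldsymbol H_\tau(\boldsymbol \beta)\boldsymbol u\rangle \le \langle \boldsymbol u,\boldsymbol S_n\boldsymbol u\rangle\le \rho_+(m,c_0)\|\boldsymbol u\|_2^2\le\kappa_{up}\|\boldsymbol u\|_2^2
\end{equation*}
by Condition \ref{con3}, noting that the cone $\mathcal{C}(m,c_0,r)$ projects onto $\mathcal{C}(m,c_0)$ in the $\boldsymbol u$-coordinate.

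The real work is the lower bound. I would write
\begin{equation*}
\langle \boldsymbol u,\boldsymbol H_\tau(\boldsymbol \beta)\boldsymbol u\rangle = \langle \boldsymbol u,\boldsymbol S_n\boldsymbol u\rangle - \frac{1}{n}\sum_{i=1}^n \mathbb{1}\bigl(|y_i-\boldsymbol x_i^\top\boldsymbol \beta|>\tau\bigr)(\boldsymbol x_i^\top\boldsymbol u)^2,
\end{equation*}
so it suffices, given Condition \ref{con3}, to show that the second term is at most $(\kappa_{low}/2)\|\boldsymbol u\|_2^2$ uniformly. Using $y_i-\boldsymbol x_i^\top\boldsymbol \beta=\varepsilon_i+\boldsymbol x_i^\top(\boldsymbol \beta^*-\boldsymbol \beta)$ and the triangle inequality,
\begin{equation*}
\mathbb{1}(|y_i-\boldsymbol x_i^\top\boldsymbol \beta|>\tau)\le \mathbb{1}(|\varepsilon_i|>\tau/2)+\mathbb{1}(|\boldsymbol x_i^\top(\boldsymbol \beta-\boldsymbol \beta^*)|>\tau/2),
\end{equation*}
which cleanly separates the noise contribution from the deviation from $\boldsymbol \beta^*$. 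The deviation part is controlled using $\|\boldsymbol \beta-\boldsymbol \beta^*\|_1\le r$ together with $\tau\ge 8r$, so that $\{|\boldsymbol x_i^\top(\boldsymbol \beta-\boldsymbol \beta^*)|>\tau/2\}$ is a rare event with probability shrinking polynomially in $r/\tau$; here Markov and the bounded fourth/second moments of the design give a quadratic-times-indicator bound.

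The noise part is handled by the moment condition: $\mathbb{P}(|\varepsilon_i|>\tau/2)\le 2^{1+\delta}v_\delta/\tau^{1+\delta}$, so in expectation
\begin{equation*}
\mathbb{E}\!\left[\tfrac{1}{n}\!\sum_{i=1}^n \mathbb{1}(|\varepsilon_i|>\tau/2)(\boldsymbol x_i^\top\boldsymbol u)^2\right]\lesssim \frac{v_\delta}{\tau^{1+\delta}}\,\kappa_{up}\|\boldsymbol u\|_2^2,
\end{equation*}
which, if $\tau\ge c_1 v_\delta^{1/(1+\delta)}$ with $c_1$ large enough relative to $\kappa_{up}/\kappa_{low}$, is no larger than $(\kappa_{low}/4)\|\boldsymbol u\|_2^2$. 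What remains, and what I expect to be the main obstacle, is turning this pointwise-in-expectation bound into a uniform one over the cone $\mathcal{C}(m,c_0,r)$. I would use the standard reduction: any $\boldsymbol u$ in the cone satisfies $\|\boldsymbol u\|_1\le(1+c_0)\sqrt{m}\,\|\boldsymbol u\|_2$, so the unit vectors in the cone are well approximated by an $\varepsilon$-net of size at most $\exp(Cm\log p)$-type (this $m$ is exactly where the $(mv_\delta)^{1/(1+\delta)}$ scaling and the $n\ge c_2 m^2 t$ sample size enter). Combined with a symmetrization–contraction argument and Talagrand/Bernstein concentration for the bounded random process $\mathbb{1}(|\varepsilon_i|>\tau/2)(\boldsymbol x_i^\top\boldsymbol u)^2$, together with a continuity argument in $\boldsymbol \beta$ using Lipschitzness of the indicator-smoothed version on scales $r$, this produces a uniform deviation bound of order $\sqrt{t/n}\cdot m$ that is absorbed into an additional $\kappa_{low}/4$ slack, leaving $\kappa_{low}/2$ overall and yielding the $1-e^{-t}$ probability statement. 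This empirical-process step, specifically the simultaneous handling of the $\boldsymbol \beta$-dependence through the non-Lipschitz indicator, is the delicate point; I would address it by replacing the hard indicator with a Lipschitz surrogate on a $\tau/4$-scale (a standard trick in the adaptive Huber literature), whose error is controlled again by the moment assumption.
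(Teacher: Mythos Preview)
Your decomposition and the upper bound are exactly what the paper does. The difference is in how the uniformity over the cone is obtained for the lower bound, and here the paper is considerably more elementary than what you propose.

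The paper first normalizes the design so that $\|\boldsymbol x_i\|_\infty\le 1$. This is the key device you are missing. With it, for any $\boldsymbol u$ in the cone one has the \emph{deterministic} bound
\[
|\langle \boldsymbol u,\boldsymbol x_i\rangle|\le \|\boldsymbol x_i\|_\infty\|\boldsymbol u\|_1\le(1+c_0)\|\boldsymbol u_J\|_1\le(1+c_0)\sqrt m\,\|\boldsymbol u_J\|_2,
\]
so after rescaling to $\|\boldsymbol u_J\|_2=1$ the noise term is bounded by
\[
(1+c_0)^2 m\cdot\frac{1}{n}\sum_{i=1}^n\mathbb{1}(|\varepsilon_i|>\tau/2).
\]
The crucial point is that the remaining average is a sum of i.i.d.\ $\{0,1\}$ random variables that depends on neither $\boldsymbol u$ nor $\boldsymbol\beta$; a single application of Markov (for the mean, giving $v_\delta(2/\tau)^{1+\delta}$) plus Hoeffding (for the fluctuation, giving $\sqrt{t/(2n)}$) finishes the job. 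Similarly, the $\boldsymbol\beta$-dependent indicator is dispatched \emph{deterministically}: since $|\langle\boldsymbol x_i,\boldsymbol\beta-\boldsymbol\beta^*\rangle|\le\|\boldsymbol\beta-\boldsymbol\beta^*\|_1\le r$, one has $\mathbb{1}(|\langle\boldsymbol x_i,\boldsymbol\beta-\boldsymbol\beta^*\rangle|\ge\tau/2)\le 2r/\tau\le 1/4$, and this term contributes at most $(2r/\tau)\|\boldsymbol S_n^{1/2}\boldsymbol u\|_2^2$. No $\varepsilon$-net, no symmetrization, no Talagrand, no Lipschitz smoothing of the indicator is used anywhere.

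Your empirical-process route would also work and is in a sense more honest, since it does not rely on the design normalization $\|\boldsymbol x_i\|_\infty\le 1$ (which the paper slips in without comment) and would extend to, say, sub-Gaussian designs. But under the paper's standing assumptions it is substantial overkill: the ``main obstacle'' you anticipate---uniformity over the cone and the non-Lipschitz indicator---simply evaporates once the $\boldsymbol u$- and $\boldsymbol\beta$-dependence is peeled off deterministically via H\"older. This also explains transparently why the conditions read $\tau\gtrsim(mv_\delta)^{1/(1+\delta)}$ and $n\gtrsim m^2 t$: the factor $m$ comes from $(1+c_0)^2 m$ in the deterministic bound, not from a covering number.
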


%%%%%%%%%%%%%%%%%%%%%%%%%%%%%%%%%%%%%%%%%%%%%%%%%%%%%%%%%%%%%%%%%%%%%%%%%%%%%%%%%%%%%%%%%%%%%%%%%%%%
\begin{definition}
	(Bregman Divergence) For convex loss function $\mathcal{L}_\tau(\cdot)$, the Bregman divergence between $\hat{ \boldsymbol \beta}$ and $\boldsymbol \beta^*$ is defined as
	$$
	D_\mathcal{L}(\hat {\boldsymbol \beta}, \boldsymbol \beta^*) 
	:= \mathcal{L}_\tau (\hat {\boldsymbol \beta}) - \mathcal{L}_\tau (\boldsymbol \beta^*) - \left < \nabla \mathcal{L}_\tau(\boldsymbol \beta^*), \hat {\boldsymbol \beta} - \boldsymbol \beta^* \right > \geq 0.
	$$
	Furthermore, it can define the following symmetric Bregman divergence between $\hat {\boldsymbol \beta}$ and $\boldsymbol \beta^*$ as
	\begin{equation}\label{sBd}
	\begin{aligned}
	D_\mathcal{L}^s(\hat {\boldsymbol \beta}, \boldsymbol \beta^*) 
	:=& D_\mathcal{L}(\hat{ \boldsymbol \beta}, \boldsymbol \beta^*) + D_\mathcal{L}( \boldsymbol \beta^*, \hat {\boldsymbol \beta})\\
	=& \left < \nabla \mathcal{L}_\tau(\hat{\boldsymbol \beta}) - \nabla \mathcal{L}_\tau(\boldsymbol \beta^*), \hat {\boldsymbol \beta} - \boldsymbol \beta^* \right > \geq 0.
	\end{aligned}
	\end{equation}
\end{definition}

\begin{lemma}\label{dsBd}
	Let $\boldsymbol \beta_l := \boldsymbol \beta^* + l(\boldsymbol \beta - \boldsymbol \beta^*)$ with $l \in (0,1]$. For Huber loss function $\mathcal{L}_{\tau}(\cdot)$, it holds that
	$$
	D_\mathcal{L}^s(\boldsymbol \beta_l, \boldsymbol \beta^*) \leq l D_\mathcal{L}^s(\boldsymbol \beta, \boldsymbol \beta^*).
	$$
\end{lemma}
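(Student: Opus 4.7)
The key identity to exploit is linearity of the displacement $\boldsymbol\beta_l - \boldsymbol\beta^* = l(\boldsymbol\beta - \boldsymbol\beta^*)$, which lets the factor $l$ be peeled directly out of the symmetric Bregman divergence. Using the closed form in \eqref{sBd}, I would first write
\begin{equation*}
D_\mathcal{L}^s(\boldsymbol\beta_l,\boldsymbol\beta^*)
= \bigl\langle \nabla\mathcal{L}_\tau(\boldsymbol\beta_l) - \nabla\mathcal{L}_\tau(\boldsymbol\beta^*),\, \boldsymbol\beta_l - \boldsymbol\beta^*\bigr\rangle
= l\,\phi(l),
\end{equation*}
where $\phi(l) := \langle \nabla\mathcal{L}_\tau(\boldsymbol\beta_l) - \nabla\mathcal{L}_\tau(\boldsymbol\beta^*),\, \boldsymbol\beta - \boldsymbol\beta^*\rangle$. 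Noting that $\phi(1) = D_\mathcal{L}^s(\boldsymbol\beta,\boldsymbol\beta^*)$, the claim reduces to showing $\phi(l)\le \phi(1)$ for all $l\in(0,1]$.

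For that, I would argue that $\phi$ is nondecreasing on $[0,1]$. Indeed, the univariate Huber function $h_\tau$ is convex and continuously differentiable (with $1$-Lipschitz derivative), hence $\mathcal{L}_\tau$ is convex and $C^1$. Standard monotonicity of the gradient of a convex function then gives
\begin{equation*}
\bigl\langle \nabla\mathcal{L}_\tau(\boldsymbol\beta_{l_2}) - \nabla\mathcal{L}_\tau(\boldsymbol\beta_{l_1}),\, \boldsymbol\beta_{l_2} - \boldsymbol\beta_{l_1}\bigr\rangle \geq 0
\end{equation*}
for any $0\le l_1 \le l_2 \le 1$. Since $\boldsymbol\beta_{l_2} - \boldsymbol\beta_{l_1} = (l_2-l_1)(\boldsymbol\beta - \boldsymbol\beta^*)$, dividing by $l_2 - l_1 > 0$ yields $\phi(l_2) \ge \phi(l_1)$, i.e., $\phi$ is nondecreasing. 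In particular $\phi(l) \le \phi(1)$, so $D_\mathcal{L}^s(\boldsymbol\beta_l,\boldsymbol\beta^*) = l\,\phi(l) \le l\,\phi(1) = l\,D_\mathcal{L}^s(\boldsymbol\beta,\boldsymbol\beta^*)$, which is exactly the claimed bound.

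Equivalently, one can view the argument through the scalar function $g(t) := \mathcal{L}_\tau(\boldsymbol\beta^* + t(\boldsymbol\beta - \boldsymbol\beta^*))$, whose convexity in $t$ makes $g'(t) - g'(0)$ nondecreasing, and note that $D_\mathcal{L}^s(\boldsymbol\beta_l,\boldsymbol\beta^*)/l = g'(l) - g'(0)$. There is no real obstacle here: the only thing one should be mindful of is that the argument uses \emph{only} convexity of the Huber loss, not its strong convexity or smoothness parameters, so no restrictions on $\tau$, $n$, or the design are needed. The case $l=1$ is an equality that I would handle separately as a trivial remark.
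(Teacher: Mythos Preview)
Your proof is correct and follows essentially the same route as the paper's: your auxiliary function $\phi(l)$ is exactly the paper's $Q'(l)$ (where $Q(l):=D_{\mathcal L}(\boldsymbol\beta_l,\boldsymbol\beta^*)$), and both arguments boil down to showing this derivative is nondecreasing via convexity of $\mathcal L_\tau$, then multiplying by $l$. The only cosmetic difference is that the paper infers monotonicity of $Q'$ from convexity of $Q$, whereas you invoke gradient monotonicity of $\mathcal L_\tau$ directly; your alternative phrasing through $g(t)$ is in fact precisely the paper's approach.
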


\begin{lemma}[Restricted Strong Convexity]\label{RSC}
	Under the condition  in Lemma \ref{lem1}, for any $(\boldsymbol u, \boldsymbol \beta) \in \mathcal{C}(m, c_0, r)$, we have 
	$$
	D_\mathcal{L}^s(\boldsymbol \beta, \boldsymbol \beta^*) \geq \frac{\kappa_{low}}{2}\|\boldsymbol \beta - \boldsymbol \beta^*\|_2^2
	$$
	with probability at least $1 - e^{-t}$.
\end{lemma}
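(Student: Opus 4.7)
The plan is to express the symmetric Bregman divergence as a line integral of the Hessian along the segment joining $\boldsymbol \beta^*$ to $\boldsymbol \beta$, and then apply Lemma \ref{lem1} uniformly along that segment. Set $\boldsymbol u := \boldsymbol \beta - \boldsymbol \beta^*$ and $\boldsymbol \beta_l := \boldsymbol \beta^* + l\boldsymbol u$ as in Lemma \ref{dsBd}. Since the scalar derivative $h'_\tau$ is $1$-Lipschitz, the map $l \mapsto \nabla \mathcal{L}_\tau(\boldsymbol \beta_l)$ is absolutely continuous on $[0,1]$, so the fundamental theorem of calculus yields
$$
\nabla \mathcal{L}_\tau(\boldsymbol \beta) - \nabla \mathcal{L}_\tau(\boldsymbol \beta^*) = \int_0^1 \boldsymbol H_\tau(\boldsymbol \beta_l)\boldsymbol u\, dl,
$$
and then, by formula (\ref{sBd}),
$$
D_\mathcal{L}^s(\boldsymbol \beta, \boldsymbol \beta^*) = \int_0^1 \langle \boldsymbol u, \boldsymbol H_\tau(\boldsymbol \beta_l)\boldsymbol u\rangle\, dl.
$$

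Next, I would verify that for every $l \in (0,1]$ the pair $(\boldsymbol u, \boldsymbol \beta_l)$ still lies in $\mathcal{C}(m, c_0, r)$: the support-type constraints on $\boldsymbol u$ are unaffected, and the radius condition is preserved because $\|\boldsymbol \beta_l - \boldsymbol \beta^*\|_1 = l\|\boldsymbol u\|_1 \leq \|\boldsymbol u\|_1 \leq r$. On the event of probability at least $1 - e^{-t}$ afforded by Lemma \ref{lem1}, this gives the uniform lower bound
$$
\langle \boldsymbol u, \boldsymbol H_\tau(\boldsymbol \beta_l)\boldsymbol u\rangle \geq \kappa_-(m,c_0,r)\,\|\boldsymbol u\|_2^2 \geq \frac{\kappa_{low}}{2}\|\boldsymbol u\|_2^2
$$
for all admissible $l$. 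Integrating this inequality over $l \in [0,1]$ then delivers the claim.

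The principal technical delicacy is the regularity of the Huber loss: because $h_\tau$ is only $C^1$, the Hessian $\boldsymbol H_\tau(\boldsymbol \beta_l)$ fails to exist exactly at those $l$ for which some residual $y_i - \boldsymbol x_i^\top \boldsymbol \beta_l$ equals $\pm \tau$. However, this exceptional set contains at most $2n$ values of $l$ and hence has Lebesgue measure zero, so the integral representation above remains valid via the Lipschitz continuity of $\nabla \mathcal{L}_\tau$. With that subtlety addressed, the argument collapses to a direct invocation of Lemma \ref{lem1} together with the invariance of the localized cone under the scaling $\boldsymbol u \mapsto l\boldsymbol u$; note that Lemma \ref{dsBd} is not required for this direction, although it would furnish an alternative route by bounding $D_\mathcal{L}^s(\boldsymbol \beta_l,\boldsymbol \beta^*)/l$ from below using the Hessian at $\boldsymbol \beta^*$ itself.
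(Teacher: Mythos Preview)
Your proposal is correct and follows essentially the same strategy as the paper: verify that every intermediate point on the segment from $\boldsymbol\beta^*$ to $\boldsymbol\beta$ remains in the local cone $\mathcal{C}(m,c_0,r)$, then invoke Lemma~\ref{lem1} to lower-bound the quadratic form $\langle \boldsymbol u,\boldsymbol H_\tau(\cdot)\boldsymbol u\rangle$ by $(\kappa_{low}/2)\|\boldsymbol u\|_2^2$.

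The only difference is tactical. The paper applies the scalar mean value theorem to the map $l\mapsto \langle \nabla\mathcal{L}_\tau(\boldsymbol\beta^*+l\boldsymbol\Delta),\boldsymbol\Delta\rangle$ to produce a single intermediate point $\tilde{\boldsymbol\beta}$ with $D_\mathcal{L}^s(\boldsymbol\beta,\boldsymbol\beta^*)=\boldsymbol\Delta^\top \boldsymbol H_\tau(\tilde{\boldsymbol\beta})\boldsymbol\Delta$, then checks $\tilde{\boldsymbol\beta}\in\mathcal{C}(m,c_0,r)$ and appeals to Lemma~\ref{lem1}. Your integral representation achieves the same end but is technically cleaner: because $h_\tau$ is only $C^1$, the Hessian fails to exist at finitely many values of $l$, and the mean value theorem in the paper tacitly assumes differentiability everywhere on the open segment. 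Your explicit treatment of this null set, via the Lipschitz continuity of $\nabla\mathcal{L}_\tau$, closes a gap the paper leaves implicit. Either route is acceptable; yours is the more scrupulous of the two.
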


%%%%%%%%%%%%%%%%%%%%%%%%%%%%%%%%%%%%%%%%%%%%%%%%%%%%%%%%%%%%%%%%%%%%%%%%%%%%%%%%%%%%%%%%%%%%%%%%%%%%%%
\begin{lemma}\label{l1cone}
	($\ell_1$ cone property) Assume that $\|\nabla \mathcal{L}_\tau(\boldsymbol \beta^*)\|_\infty \leq \lambda_1/2$ and $\|\boldsymbol D\|_\infty = d > 0$. Let $\lambda_2 = b \lambda_1$ with $ b > 0$.
	Let $\hat {\boldsymbol \beta}$ be an optimal solution of (\ref{admod}).
	We have that $\hat {\boldsymbol \beta}$ falls in a local $\ell_1$cone, where
	$$
	\|(\hat {\boldsymbol \beta} - \boldsymbol \beta^*)_{\mathcal{S}^c}\|_1 \leq \frac{2bd + 3}{2bd + 1}\|(\hat {\boldsymbol \beta} - \boldsymbol \beta^*)_{\mathcal{S}}\|_1.
	$$
\end{lemma}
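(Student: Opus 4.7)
The plan is the standard ``basic inequality'' argument adapted to the fused-type penalty. Setting $\boldsymbol{h} := \hat{\boldsymbol{\beta}} - \boldsymbol{\beta}^*$, the first step is to invoke the definition of $\hat{\boldsymbol{\beta}}$ as a minimizer of~(\ref{admod}), which yields
\begin{equation*}
\mathcal{L}_\tau(\hat{\boldsymbol{\beta}}) + \lambda_1 \|\hat{\boldsymbol{\beta}}\|_1 + \lambda_2 \|\boldsymbol{D} \hat{\boldsymbol{\beta}}\|_1
\leq \mathcal{L}_\tau(\boldsymbol{\beta}^*) + \lambda_1 \|\boldsymbol{\beta}^*\|_1 + \lambda_2 \|\boldsymbol{D} \boldsymbol{\beta}^*\|_1,
\end{equation*}
and to rearrange it so that the two penalty differences are collected on the left while the loss difference $\mathcal{L}_\tau(\boldsymbol{\beta}^*) - \mathcal{L}_\tau(\hat{\boldsymbol{\beta}})$ sits on the right.

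Next I would control that loss difference using convexity of the Huber loss. Convexity of $\mathcal{L}_\tau$ yields $\mathcal{L}_\tau(\boldsymbol{\beta}^*) - \mathcal{L}_\tau(\hat{\boldsymbol{\beta}}) \leq -\langle \nabla \mathcal{L}_\tau(\boldsymbol{\beta}^*), \boldsymbol{h}\rangle$, and H\"older's inequality combined with the dual-norm hypothesis $\|\nabla \mathcal{L}_\tau(\boldsymbol{\beta}^*)\|_\infty \leq \lambda_1/2$ then produces the clean bound $\mathcal{L}_\tau(\boldsymbol{\beta}^*) - \mathcal{L}_\tau(\hat{\boldsymbol{\beta}}) \leq \tfrac{\lambda_1}{2}\|\boldsymbol{h}\|_1$.

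The third step is to lower bound each of the two penalty differences in terms of $\|\boldsymbol{h}_{\mathcal{S}}\|_1$ and $\|\boldsymbol{h}_{\mathcal{S}^c}\|_1$. Using that $\boldsymbol{\beta}^*$ vanishes on $\mathcal{S}^c$, the triangle and reverse triangle inequalities give $\|\hat{\boldsymbol{\beta}}\|_1 - \|\boldsymbol{\beta}^*\|_1 \geq \|\boldsymbol{h}_{\mathcal{S}^c}\|_1 - \|\boldsymbol{h}_{\mathcal{S}}\|_1$. For the fused piece, the triangle inequality together with the operator-type bound $\|\boldsymbol{D}\boldsymbol{h}\|_1 \leq d\|\boldsymbol{h}\|_1$ coming from $\|\boldsymbol{D}\|_\infty = d$ yields $\|\boldsymbol{D}\hat{\boldsymbol{\beta}}\|_1 - \|\boldsymbol{D}\boldsymbol{\beta}^*\|_1 \geq -d(\|\boldsymbol{h}_{\mathcal{S}}\|_1 + \|\boldsymbol{h}_{\mathcal{S}^c}\|_1)$.

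In the final step, I substitute these estimates into the rearranged basic inequality, split $\|\boldsymbol{h}\|_1 = \|\boldsymbol{h}_{\mathcal{S}}\|_1 + \|\boldsymbol{h}_{\mathcal{S}^c}\|_1$, and use $\lambda_2 = b\lambda_1$ to cancel $\lambda_1$ throughout. Collecting coefficients of $\|\boldsymbol{h}_{\mathcal{S}^c}\|_1$ on the left and of $\|\boldsymbol{h}_{\mathcal{S}}\|_1$ on the right produces an inequality of the form $\alpha\|\boldsymbol{h}_{\mathcal{S}^c}\|_1 \leq \beta\|\boldsymbol{h}_{\mathcal{S}}\|_1$, from which dividing produces the claimed ratio $(2bd+3)/(2bd+1)$. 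The main obstacle will be the bookkeeping for the $\boldsymbol{D}\boldsymbol{h}$ term, since it mixes coordinates across $\mathcal{S}$ and $\mathcal{S}^c$, and the factor $d$ together with $b$ has to be distributed carefully between the two sides of the final inequality in order to recover exactly the stated constant.
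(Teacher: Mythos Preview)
Your basic-inequality strategy is sound for the lasso term but breaks down on the fused term, and the algebra does \emph{not} produce the stated constant. Carrying out your bookkeeping exactly as you describe, with $\lambda_2 = b\lambda_1$, gives
\[
\Bigl(\tfrac{1}{2}-bd\Bigr)\|\boldsymbol{h}_{\mathcal{S}^c}\|_1 \;\le\; \Bigl(\tfrac{3}{2}+bd\Bigr)\|\boldsymbol{h}_{\mathcal{S}}\|_1,
\]
i.e.\ the ratio $(2bd+3)/(1-2bd)$, not $(2bd+3)/(2bd+1)$. The culprit is your step $\|\boldsymbol{D}\hat{\boldsymbol{\beta}}\|_1-\|\boldsymbol{D}\boldsymbol{\beta}^*\|_1\ge -d\|\boldsymbol{h}\|_1$: this throws the fused penalty away entirely and places the $bd$ contribution on the \emph{wrong} side for $\|\boldsymbol{h}_{\mathcal{S}^c}\|_1$. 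To match the lemma you need a \emph{positive} $bd$ coefficient in front of $\|\boldsymbol{h}_{\mathcal{S}^c}\|_1$, and a pure triangle-inequality bound cannot deliver that; moreover your inequality is vacuous once $2bd\ge 1$.

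The paper does not use the basic inequality at all. It works from the first-order optimality condition $\nabla\mathcal{L}_\tau(\hat{\boldsymbol{\beta}})+\lambda_1\tilde{\boldsymbol\omega}_1+\lambda_2\tilde{\boldsymbol\omega}_2=0$ with $\tilde{\boldsymbol\omega}_2\in\partial\|\boldsymbol D\hat{\boldsymbol\beta}\|_1$, combines it with monotonicity of the gradient (nonnegativity of the symmetric Bregman divergence), and then lower-bounds the inner product $\lambda_2\langle\tilde{\boldsymbol\omega}_2,\hat{\boldsymbol\beta}-\boldsymbol\beta^*\rangle$ coordinate-wise on $\mathcal{S}$ and $\mathcal{S}^c$ separately. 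The key point is that the subgradient formulation lets them argue that on $\mathcal{S}^c$ the fused term contributes $+\lambda_2 d\|\boldsymbol{h}_{\mathcal{S}^c}\|_1$ rather than $-\lambda_2 d\|\boldsymbol{h}_{\mathcal{S}^c}\|_1$, which is exactly the sign flip your approach is missing and what produces $2bd+1$ instead of $1-2bd$ in the denominator.
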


For a similar proof of this lemma, one may refer to Fan et al. \cite{fan2018lamm}. This lemma shows that the optimal solution $\hat {\boldsymbol \beta}$ of (\ref{admod}) must falls into a $\ell_1$cone. 

%%%%%%%%%%%%%%%%%%%%%%%%%%%%%%%%%%%%%%%%%%%%%%%%%%%%%%%%%%%%%%%%%%%%%%%%%%%%%%%%%%%%%%%%%
In light of above analysis, we are ready to present the main results on the regularized adaptive Huber estimator in high dimension. 
\begin{theorem}\label{th1}
	Let $\hat{\boldsymbol \beta}$ be the fused lasso regularized Huber estimator to (\ref{admod}).
	For any $t>0$ and $\tau_0 \geq \nu_{\delta}:=\min\{v_{\delta}^{1/(1+\delta)}, v_1^{1/2}\}$, let the robust and regularized parameters be
	\begin{align*}
	\tau &= \tau_0(n/t)^{\max\{ 1/(1+\delta), 1/2\}},\\  
	\lambda_1 &\geq  4\tau_0 (t/n)^{\min\{ \delta/(1+\delta), 1/2\}}.
	\end{align*}
	Assume that Condition \ref{con1}, \ref{con2} and \ref{con3} hold with $c_0 = \frac{2bd + 3}{2bd + 1}$, $b>0$, $d>0$ and $r \gtrsim \lambda_1\kappa_{low}^{-1} s$.
	Then we have 
	$$
	\|\hat{\boldsymbol \beta} - \boldsymbol \beta^*\|_2 
	\leq \lambda_1\kappa_{low}^{-1}\sqrt{s}
	$$
	with probability at least $1-(1+2p)e^{-t}$ as long as $n\geq c_3 m^2t$ for a certain large constant $c_3>0$.
\end{theorem}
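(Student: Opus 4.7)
The plan is to follow the standard recipe for $\ell_2$-error bounds in high-dimensional $M$-estimation, specialized to the adaptive Huber loss and the two-part fused-lasso penalty, with the preparatory lemmas in the excerpt doing most of the heavy lifting. In order, I will (i) control the empirical score at $\boldsymbol \beta^*$ in $\ell_\infty$, (ii) invoke the cone property in Lemma \ref{l1cone} and the restricted strong convexity in Lemma \ref{RSC}, and (iii) close the argument by a localization step that leverages the Bregman-shrinkage identity in Lemma \ref{dsBd}.

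The first step is to prove $\|\nabla\mathcal{L}_\tau(\boldsymbol \beta^*)\|_\infty \leq \lambda_1/2$ with probability at least $1 - 2pe^{-t}$. Since $\nabla\mathcal{L}_\tau(\boldsymbol \beta^*) = -n^{-1}\sum_{i=1}^n \boldsymbol x_i \psi_\tau(\varepsilon_i)$, where $\psi_\tau := h_\tau'$ is the Huber score bounded by $\tau$, each coordinate is an average of i.i.d.\ random variables with bounded magnitude and finite $(1+\delta)$-th moment inherited from Condition \ref{con1}. A Bernstein/Bennett-type inequality together with a union bound over the $p$ coordinates yields the claim, once the truncation bias $|\mathbb{E}\psi_\tau(\varepsilon_i)|$ is balanced against the stochastic deviation. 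This is precisely what the calibration $\tau = \tau_0(n/t)^{\max\{1/(1+\delta),1/2\}}$ and $\lambda_1 \gtrsim \tau_0(t/n)^{\min\{\delta/(1+\delta),1/2\}}$ is designed to do, with the two regimes $\delta \leq 1$ and $\delta > 1$ handled by the corresponding branches of $\max$ and $\min$.

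On this favorable event, Lemma \ref{l1cone} places $\boldsymbol \Delta := \hat{\boldsymbol \beta} - \boldsymbol \beta^*$ in the restricted cone with ratio $c_0 = (2bd+3)/(2bd+1)$, so that $\|\boldsymbol \Delta\|_1 \leq (1+c_0)\sqrt{s}\,\|\boldsymbol \Delta\|_2$. To apply Lemma \ref{RSC} one needs $\|\boldsymbol \Delta\|_1 \leq r$, which is not granted a priori. I handle this by the standard localization device: set $l := \min\{1, r/\|\boldsymbol \Delta\|_1\}$ and $\boldsymbol \beta_l := \boldsymbol \beta^* + l\boldsymbol \Delta$, so that $(l\boldsymbol \Delta, \boldsymbol \beta_l) \in \mathcal C(s, c_0, r)$ by construction (the cone in $\boldsymbol u$ is closed under positive scaling, and $l\|\boldsymbol \Delta\|_1 \leq r$). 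Lemmas \ref{RSC} and \ref{dsBd} then combine to give $(\kappa_{low}/2)\,l^2\|\boldsymbol \Delta\|_2^2 \leq D_\mathcal{L}^s(\boldsymbol \beta_l, \boldsymbol \beta^*) \leq l\,D_\mathcal{L}^s(\hat{\boldsymbol \beta}, \boldsymbol \beta^*)$.

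To control $D_\mathcal{L}^s(\hat{\boldsymbol \beta}, \boldsymbol \beta^*)$, the optimality of $\hat{\boldsymbol \beta}$ and the subgradient characterization of the fused-lasso penalty (exactly as in the proof of Lemma \ref{l1cone}) yield $D_\mathcal{L}^s(\hat{\boldsymbol \beta}, \boldsymbol \beta^*) \leq \langle -\nabla\mathcal{L}_\tau(\boldsymbol \beta^*), \boldsymbol \Delta\rangle + \lambda_1(\|\boldsymbol \beta^*\|_1 - \|\hat{\boldsymbol \beta}\|_1) + \lambda_2(\|\boldsymbol D\boldsymbol \beta^*\|_1 - \|\boldsymbol D\hat{\boldsymbol \beta}\|_1)$; bounding the first term by $(\lambda_1/2)\|\boldsymbol \Delta\|_1$ and using a support decomposition on both penalty differences collapses the right-hand side to $C\lambda_1\|\boldsymbol \Delta_{\mathcal S}\|_1 \leq C\lambda_1\sqrt{s}\,\|\boldsymbol \Delta\|_2$. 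Substituting and dividing by $l$ (assuming $l>0$) gives $l\|\boldsymbol \Delta\|_2 \leq (2C/\kappa_{low})\lambda_1\sqrt{s}$. A contradiction then closes the argument: if $l<1$, then $l\|\boldsymbol \Delta\|_1 = r$, whence $l\|\boldsymbol \Delta\|_2 \geq r/((1+c_0)\sqrt{s}) \gtrsim \lambda_1\kappa_{low}^{-1}\sqrt{s}$ once the implicit constant in $r \gtrsim \lambda_1\kappa_{low}^{-1}s$ is taken large enough, contradicting the previous bound. Hence $l=1$ and $\|\boldsymbol \Delta\|_2 \leq \lambda_1\kappa_{low}^{-1}\sqrt{s}$, and union-bounding the $2pe^{-t}$ score failure with the $e^{-t}$ from Lemma \ref{RSC} gives the stated probability $1-(1+2p)e^{-t}$. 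I expect the main obstacle to be this localization/contradiction step, which is where both the lower bound on $r$ and the Bregman-shrinkage Lemma \ref{dsBd} are essential; the score concentration, while standard in flavor, also requires careful moment--truncation bookkeeping to cover both $\delta$-regimes in a single inequality.
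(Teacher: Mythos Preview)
Your proposal is correct and follows essentially the same route as the paper: control $\|\nabla\mathcal{L}_\tau(\boldsymbol\beta^*)\|_\infty$ by Bernstein-type concentration (the paper cites Sun et al.\ for this), invoke Lemma~\ref{l1cone} for the cone, upper-bound $D_{\mathcal L}^s(\hat{\boldsymbol\beta},\boldsymbol\beta^*)$ from the KKT conditions, localize via $\boldsymbol\beta_l$ and combine Lemmas~\ref{dsBd} and~\ref{RSC}, and finally use $r\gtrsim\lambda_1\kappa_{low}^{-1}s$ to force $l=1$. The only cosmetic difference is that the paper bounds the three pieces $\langle\nabla\mathcal{L}_\tau(\boldsymbol\beta^*),\cdot\rangle$, $\lambda_1\langle\tilde{\boldsymbol\omega}_1,\cdot\rangle$, $\lambda_2\langle\tilde{\boldsymbol\omega}_2,\cdot\rangle$ directly via H\"older, whereas you pass through the penalty-difference form $\lambda_1(\|\boldsymbol\beta^*\|_1-\|\hat{\boldsymbol\beta}\|_1)+\lambda_2(\|\boldsymbol D\boldsymbol\beta^*\|_1-\|\boldsymbol D\hat{\boldsymbol\beta}\|_1)$; both lead to the same $C\lambda_1\sqrt{s}\,\|\boldsymbol\Delta\|_2$ bound.
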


Theorem \ref{th1} builds the nonasymptotics convergence rates of our proposed estimation method in the high-dimensional setting. 
Thus, the upper bound in Theorem \ref{th1} can be rewritten as 
$$
\|\hat{\boldsymbol \beta} - \boldsymbol \beta^*\|_2 \lesssim \sqrt{\frac{p_{\text {eff}}}{n_{\text {eff}}}}
$$
by setting the effective dimension as $p_{\text {eff}} := s$ and the effective sample size as $n_{\text {eff}} := (n/t)^{\min\{2\delta/(1+\delta), 1\}}$.
The effective dimension depends only on the sparsity while the effective sample size depends only on the sample size divided by the probability parameter.
We can observe that the rate of convergence is affected by the heavy-tailedness only through the effective sample size because the effective dimension keeps the same regardless of $\delta$.

%%%%%%%%%%%%%%%%%%%%%%%%%%%%%%%%%%%%%%%%%%%%%%%%%%%%%%%%%%%%%%%%%%%%%%%%%%%%%%%%%%%%%%%%%%%%%%%%%%%%
\section{Numerical Algorithm}\label{sec4}
\subsection{Preliminary Results in Convex Analysis}
Now we quickly review some useful results in convex analysis \cite{rockafellar1970convex} for subsequent developments. Let $f:\mathbb{R}^p\rightarrow(-\infty,+\infty]$ be a proper closed convex function. 
The conjugate function of $f$ at $\boldsymbol y$ is defined as $f^*(\boldsymbol y) := \sup_{\boldsymbol x}\{\langle \boldsymbol x,\boldsymbol y \rangle - f(\boldsymbol x)\}$. It is well known that $f^*(\boldsymbol y)$ is convex and closed, proper if and only if $f$
is proper.
The proximal mapping of $\boldsymbol x$ associates to $f$ is defined by
$$
\mathcal{P}_{f(\cdot)}^c(\boldsymbol y) = \arg\min_{\boldsymbol x}\big\{ f(\boldsymbol x) + \frac{1}{2c}\| \boldsymbol x - \boldsymbol y\|_2^2\big\},
$$
where $c>0$ is a positive scalar.
The proximal mapping $\mathcal{P}_{f(\cdot)}^c(\boldsymbol y)$ exists and is unique for all $\boldsymbol y$ if $f$ is proper closed and convex. For any $\boldsymbol y\in\mathbb{R}^p$, the Moreau’s identity is
$$
\boldsymbol y = \mathcal{P}_{f(\cdot)}^c(\boldsymbol y) + c\mathcal{P}_{f^*(\cdot)}^{1/c}(\boldsymbol y/c),
$$
which means that the proximal mapping of  $f^*$ can be attained via computing the proximal mapping associated to $f$.
For example, the proximal mapping of an $\ell_1$-norm function is defined as
$$
\mathcal{P}_{\|\cdot\|_1}^c(\boldsymbol y)=\arg\min_{\boldsymbol x}\{\|\boldsymbol x\|_1+\frac{1}{2c}(\boldsymbol x - \boldsymbol y)^2\},
$$
which has an unique explicit solution, that is
\begin{equation}\label{norm1}
\mathcal{P}_{\|\cdot\|_1}^c(\boldsymbol y)=\text{sgn}(\boldsymbol y)\odot\max\{\mid\boldsymbol y\mid-c,0\},
\end{equation}
where `sgn' is a sign function and the symbol $\odot$ is Hadamard product of a pair of vectors.

Consider the following convex composite programming problem
\begin{equation}\label{standard}
\begin{array}{rl}
&\min\limits_{\boldsymbol x, \boldsymbol y}  f(\boldsymbol x) + g(\boldsymbol y) \\
&s.t. ~~\boldsymbol A\boldsymbol x + \boldsymbol B\boldsymbol y = \boldsymbol c,
\end{array}
\end{equation}
where $f(\cdot)$ and $g(\cdot)$ are proper closed convex functions, $\boldsymbol A\in\mathbb{R}^{l\times p}$ and $\boldsymbol B\in\mathbb{R}^{l\times q}$ are matrices, $\boldsymbol c\in\mathbb{R}^{l}$ is a given data.
The Lagrangian function of (\ref{standard}) as
\begin{equation}
\mathcal{L}(\boldsymbol x, \boldsymbol y; \boldsymbol z) = f(\boldsymbol x) + g(\boldsymbol y) - \left< \boldsymbol z, A\boldsymbol x + B\boldsymbol y - \boldsymbol c \right>,
\end{equation}
where $\boldsymbol z \in \mathbb{R}^n$ often referred to as the Lagrangian dual variable. Theoretically, the Lagrange dual function is defined as  minimizing  $\mathcal{L}(\boldsymbol x, \boldsymbol y; \boldsymbol z)$ on $(\boldsymbol x, \boldsymbol y)$, and the Lagrange dual problem of (\ref{standard}) is defined as maximizing 
the Lagrange dual function on $\boldsymbol z$, that is,
\begin{equation}\label{dualpp}
\max_{\boldsymbol z}~~-f^*(\boldsymbol A^\top \boldsymbol z) - g^*(\boldsymbol B^\top \boldsymbol z) + \left< \boldsymbol z, \boldsymbol c \right>.
\end{equation}

To ensure (\ref{standard}) along with its dual form (\ref{dualpp}) have optimal solutions, throughout this section, we make the following assumptions :
\begin{assumption}
	(i) The functions $f: \mathbb{R}^p \rightarrow \mathbb{R} \cup \{+ \infty\}$ and $g : \mathbb{R}^q \rightarrow \mathbb{R} \cup \{+ \infty\}$ in (\ref{standard}) are closed, proper, and convex;
	(ii) the Lagrangian function has a saddle point (i.e., there exist $(\boldsymbol x^*, \boldsymbol y^*, \boldsymbol z^*)$ for which $\mathcal{L}(\boldsymbol x^*, \boldsymbol y^*, \boldsymbol z) \leq \mathcal{L}(\boldsymbol x^*, \boldsymbol y^*, \boldsymbol z^*) \leq \mathcal{L}(\boldsymbol x, \boldsymbol y, \boldsymbol z^*)$ holds for all $(\boldsymbol x, \boldsymbol y , \boldsymbol z)$.		
\end{assumption}
Note that the second part of this assumption is equivalent to a strong duality, which can be ensured by constraint qualifications such as Slater’s condition \cite{rockafellar1970convex}.
For more details on constraints qualifications in optimization, we refer the readers to \cite{bonnans2013perturbation}.

%%%%%%%%%%%%%%%%%%%%%%%%%%%%%%%%%%%%%%%%%%%%%%%%%%%%%%
\subsection{Optimality Conditions and Algorithm}
In this part, we use ADMM to solve fused lasso regularized adaptive Huber regression problem (\ref{admod2}).
Let $\boldsymbol \mu := (\boldsymbol \mu_{z}~\boldsymbol \mu_{\alpha}~\boldsymbol \mu_{\gamma})^\top \in \mathbb{R}^{n+2p-1}$ be multiplier associate to the constraint in (\ref{admod2}), then the Lagrangian function takes the following form
\begin{equation}\label{L1}
\begin{aligned}
&\mathcal{L}(\boldsymbol z, \boldsymbol \alpha, \boldsymbol \beta, \boldsymbol \gamma; \boldsymbol \mu)\\
=&\mathcal{L}_{\tau}(\boldsymbol y - \boldsymbol z) + \lambda_1\|\boldsymbol \alpha\|_1 + \lambda_2\|\boldsymbol \gamma\|_1
+ \boldsymbol \mu^\top(\boldsymbol \theta - \tilde{\boldsymbol X}\boldsymbol \beta).
\end{aligned}
\end{equation}
From optimization theory, it is known that finding a saddle point $(\bar{\boldsymbol z},\bar{\boldsymbol \alpha},\bar{\boldsymbol \beta},\bar{\boldsymbol \gamma};\bar{\boldsymbol \mu})$ of (\ref{L1})
is equivalent to finding
 $(\bar{\boldsymbol z},\bar{\boldsymbol \alpha},\bar{\boldsymbol \beta},\bar{\boldsymbol \gamma}; \bar{\boldsymbol \mu})$ such that the following Karush-Kuhn-Tucker (KKT) system is satisfied
\begin{equation}\label{kktP}
\left\{
\begin{array}{l}
-\partial \mathcal{L}_{\tau}(\boldsymbol y - \bar{\boldsymbol z}) + \bar{\boldsymbol \mu}_z \ni 0,\\[2mm]
\lambda_1 \partial \| \bar{\boldsymbol \alpha} \|_1 + \bar{\boldsymbol \mu}_{\alpha}  \ni 0,\\[2mm]
\tilde{\boldsymbol X}^\top \bar{\boldsymbol \mu} = 0,\\[2mm]
\lambda_2 \partial \| \bar{\boldsymbol \gamma} \|_1 + \bar{\boldsymbol \mu}_{\gamma} \ni 0,\\[2mm]
\bar{\boldsymbol \theta} - \tilde{\boldsymbol X}\bar{\boldsymbol \beta} = 0,
\end{array}
\right.
\end{equation}
where  $(\bar{\boldsymbol z},\bar{\boldsymbol \alpha},\bar{\boldsymbol \beta},\bar{\boldsymbol \gamma})$ is an optimal solution of (\ref{admod2}) and $\bar{\boldsymbol \mu}$ is an optimal solution of the corresponding dual problem.
We note that the KKT system plays a key role in the stopping criteria for the algorithm given below.

Let $\sigma>0$ be a penalty parameter. The augmented Lagrangian function to problem (\ref{admod2}) is given as
$$
\begin{aligned}
&\mathcal{L}_{\sigma}(\boldsymbol z, \boldsymbol \alpha, \boldsymbol \beta, \boldsymbol \gamma; \boldsymbol \mu)\\
:=& \mathcal{L}_{\tau}(\boldsymbol y - \boldsymbol z) + \lambda_1\|\boldsymbol \alpha\|_1 + \lambda_2\|\boldsymbol \gamma\|_1
+ \boldsymbol \mu^\top(\boldsymbol \theta - \tilde{\boldsymbol X}\boldsymbol \beta) \\
&+ \frac{\sigma}2\|\boldsymbol \theta - \tilde{\boldsymbol X}\boldsymbol \beta\|_2^2,
\end{aligned}
$$
or equivalently,
\begin{equation}\label{aL1}
	\begin{aligned}
		&\mathcal{L}_{\sigma}(\boldsymbol z, \boldsymbol \alpha, \boldsymbol \beta, \boldsymbol \gamma; \boldsymbol \mu)
		= \mathcal{L}_{\tau}(\boldsymbol y - \boldsymbol z) + \lambda_1\|\boldsymbol \alpha\|_1 + \lambda_2\|\boldsymbol \gamma\|_1\\
		&+ \frac{\sigma}{2}\|\boldsymbol \theta - \tilde{\boldsymbol X}\boldsymbol \beta + \boldsymbol \mu/\sigma\|_2^2.
	\end{aligned}
\end{equation}
Noting that there are four blocks involved in problem (\ref{admod2}) and each block is completely independent of each other, then the ADMM reviewed before can be used directly. For convenience, we view $\boldsymbol \beta$ as a group and $(\boldsymbol \alpha, \boldsymbol \gamma, \boldsymbol z)$ as another. Given an initial point, then the ADMM reduces the following iterative framework:
\begin{align}
\boldsymbol \beta^{k+1} 
=&\arg\min_{\boldsymbol \beta}\mathcal{L}_\sigma(\boldsymbol z^k, \boldsymbol \alpha^k, \boldsymbol \beta, \boldsymbol \gamma^k;\boldsymbol \mu^k),\label{subb}\\
\boldsymbol \theta^{k+1}:=&(\boldsymbol \alpha^{k+1}, \boldsymbol \gamma^{k+1}, \boldsymbol z^{k+1}) \nonumber\\
=& \arg\min_{\boldsymbol \alpha, \boldsymbol \gamma, \boldsymbol z}\mathcal{L}_\sigma(\boldsymbol z, \boldsymbol \alpha, \boldsymbol \beta^{k+1}, \boldsymbol \gamma;\boldsymbol \mu^k),\label{subthe}\\
\boldsymbol \mu^{k+1} 
=& \boldsymbol \mu^k + \pi\sigma(\boldsymbol \theta^{k+1} - \tilde{\boldsymbol X}\boldsymbol \beta^{k+1}),\label{submul}
\end{align}
where $\pi \in (0, (1+\sqrt{5})/2)$ is a step length. Clearly, the main computational cost lies in the subproblems with respect to variables $\boldsymbol z$, $\boldsymbol \alpha$, $\boldsymbol \beta$ and $\boldsymbol \gamma$. In the following, we show that each subproblem admits closed form solutions which make this iterative framework is easily implemented. 

\subsection{Subproblems' Solving}
This part is devoted to solving the $\boldsymbol \beta$-subproblem and  the  $(\boldsymbol \alpha, \boldsymbol \gamma, \boldsymbol z)$-subproblem involved in (\ref{subb}) and (\ref{subthe}), respectively. 

With fixed values of other variables, the $\boldsymbol \beta$-subprolem takes the following form
\begin{equation*}
\begin{aligned}
\boldsymbol \beta^{k+1}
=&\arg\min\limits_{\boldsymbol \beta}\mathcal{L}_\sigma(\boldsymbol z^k, \boldsymbol \alpha^k, \boldsymbol \beta, \boldsymbol \gamma^k; \boldsymbol \mu^k)\\
=&\arg\min\limits_{\boldsymbol \beta}
-\langle\boldsymbol \mu^k,\tilde{\boldsymbol X}\boldsymbol \beta\rangle + \frac{\sigma}{2}\|\boldsymbol \theta^k - \tilde{\boldsymbol X}\boldsymbol \beta\|_2^2.
\end{aligned}
\end{equation*}	
Noting that it is actually a quadratic programming, then, its solution can be easily derived with the following compact form
\begin{equation}\label{beta}
\boldsymbol \beta^{k+1} 
= (\tilde{\boldsymbol X}^\top \tilde{\boldsymbol X})^{-1}\tilde{\boldsymbol X}^\top(\boldsymbol \theta^k + \boldsymbol \mu^k/\sigma).
\end{equation}
For the  $(\boldsymbol \alpha, \boldsymbol \gamma, \boldsymbol z)$-subproblem, we notice that the variable  $\boldsymbol \alpha$, $\boldsymbol \gamma$, and  $\boldsymbol z$ are independent of each other, which means that finding them together is equivalent to finding them one by one with an arbitrary order.
Firstly, when $\boldsymbol z:=\boldsymbol z^k$ and $\boldsymbol \gamma:=\boldsymbol \gamma^k$ are fixed, the $\boldsymbol \alpha$-subproblem can be expressed as
\begin{equation*}
\begin{aligned}
&\boldsymbol \alpha^{k+1}\\
=&\arg\min\limits_{\boldsymbol \alpha}\mathcal{L}_\sigma(\boldsymbol z^k, \boldsymbol \alpha, \boldsymbol \beta^{k+1}, \boldsymbol \gamma^k; \boldsymbol \mu^k)\\
=&\arg\min\limits_{\boldsymbol \alpha}\Big\{\lambda_1\|\boldsymbol \alpha\|_1 + \frac{\sigma}{2}\|\boldsymbol \alpha - \boldsymbol \beta^{k+1} + \boldsymbol \mu_{\alpha}^k/\sigma \|_2^2\Big\},
\end{aligned}	
\end{equation*}
which admits closed form solutions by using (\ref{norm1}), that is,
\begin{equation}\label{alpha}
\boldsymbol \alpha^{k+1} 
= \text{sgn}(\boldsymbol \xi^k) \odot \max\{\mid\boldsymbol \xi^k\mid-\lambda_1/\sigma,0\},
\end{equation}
where $\boldsymbol \xi^k = \boldsymbol \beta^{k+1} - \boldsymbol \mu_{\alpha}^{k}/\sigma$.
Secondly,  when $\boldsymbol z:=\boldsymbol z^k$ and $\boldsymbol \alpha:=\boldsymbol \alpha^{k+1}$ are fixed, the $\boldsymbol \gamma$-subproblem takes the following form
\begin{equation*}
\begin{aligned}
&\boldsymbol \gamma^{k+1}\\
=&\arg\min\limits_{\boldsymbol \gamma}\mathcal{L}_\sigma(\boldsymbol z^k, \boldsymbol \alpha^{k+1}, \boldsymbol \beta^{k+1}, \boldsymbol \gamma; \boldsymbol \mu^k)\\
=&\arg\min\limits_{\boldsymbol \gamma}\Big\{\lambda_2\|\boldsymbol \gamma\|_1 + \frac{\sigma}{2}\|\boldsymbol \gamma - \boldsymbol D\boldsymbol \beta^{k+1} + \boldsymbol \mu_{\gamma}^k/\sigma\|_2^2\Big\},
\end{aligned}	
\end{equation*}
which also admits closed form solutions from (\ref{norm1}), that is,
\begin{equation}\label{gamma}
\boldsymbol \gamma^{k+1}
=\text{sgn}(\boldsymbol \eta^k)\odot\max\{\mid\boldsymbol \eta^k\mid-\lambda_2/\sigma,0\},
\end{equation}
where $\boldsymbol \eta^k = \boldsymbol D\boldsymbol \beta^{k+1} - \boldsymbol \mu_{\gamma}^k/\sigma$.
Thirdly, when $\boldsymbol \gamma:=\boldsymbol \gamma^{+1}$ and $\boldsymbol \alpha:=\boldsymbol \alpha^{k+1}$ are fixed, the $\boldsymbol z$-subproblem can be computed element-wise regarding to $n$ independent one-dimensional problems, that is,
\begin{equation*}
\begin{aligned}
&\boldsymbol z^{k+1}\\
=&\arg\min\limits_{\boldsymbol z}
\mathcal{L}_\sigma(\boldsymbol z, \boldsymbol \alpha^{k+1}, \boldsymbol \beta^{k+1}, \boldsymbol \gamma^{k+1}; \boldsymbol \mu^k)\\
=&\arg\min\limits_{\boldsymbol z}
\Big\{ \mathcal{L}_{\tau}(\boldsymbol y - \boldsymbol z) + \frac{\sigma}{2}\|\boldsymbol z - X\boldsymbol \beta^{k+1} + \boldsymbol \mu_{z}^k/\sigma\|_2^2\Big\}\\
=&\sum_{i=1}^{n}\arg\min\limits_{z_i}
\Big\{ \frac{1}{n}h_{\tau}(y_i - z_i) \\
&\quad+ \frac{\sigma}{2}(z_i - \boldsymbol x_i\boldsymbol \beta^{k+1} + \mu_{z_i}^k/\sigma)^2\Big\}.
\end{aligned}	
\end{equation*}
We observe that solving each $z_i$-subproblem
\begin{equation}\label{z1}
\arg\min\limits_{z_i}\frac{1}{n}h_{\tau}(y_i - z_i) + \frac{\sigma}{2}(z_i - \boldsymbol x_i\boldsymbol \beta^{k+1} + \mu_{z_i}^k/\sigma)^2,
\end{equation}
can be divided into the following two cases:\\
{\tt Case 1:} In the case of $\mid y_i - z_i\mid \leq \tau$, (\ref{z1}) reduces to 
\begin{equation*}
\arg\min\limits_{z_i}\frac{1}{2n}(y_i - z_i)^2 + \frac{\sigma}{2}(z_i - \boldsymbol x_i\boldsymbol \beta^{k+1} + \mu_{z_i}^k/\sigma)^2.
\end{equation*}
Then, its solution takes the following form
\begin{equation}\label{z2}
z_i^{k+1} = (y_i + n \sigma \boldsymbol x_i\boldsymbol \beta^{k+1} - n\mu_{z_i}^k)/(n \sigma + 1).
\end{equation}
{\tt Case 2:} In the case of $\mid y_i - z_i\mid > \tau$, (\ref{z1}) reduces to 
\begin{equation*}
\arg\min\limits_{z_i}\frac{\tau}{n}\mid y_i - z_i\mid + \frac{\sigma}{2}(z_i - \boldsymbol x_i\boldsymbol \beta^{k+1} + \mu_{z_i}^k/\sigma)^2.
\end{equation*}
Let $h_i = y_i - z_i$, then it becomes
\begin{equation*}
\arg\min\limits_{z_i}\frac{\tau}{n}\mid h_i\mid + \frac{\sigma}{2}(h_i - y_i + \boldsymbol x_i\boldsymbol \beta^{k+1} - \mu_{z_i}^k/\sigma)^2,
\end{equation*}
which admits closed form solutions by using (\ref{norm1}), that is,
\begin{equation*}
h_i^{k+1}=\text{sgn}(\zeta_i^k)\cdot\max\{\mid\zeta_i^k\mid-\tau/(n\sigma),0\},
\end{equation*}
where $\zeta_i^k = y_i - \boldsymbol x_i\boldsymbol \beta^{k+1} + \mu_{z_i}^k/\sigma$.
Therefore,
\begin{equation}\label{z3}
\begin{aligned}
z_i^{k+1} 
=& y_i -h_i^{k+1} \\
=& y_i - \text{sgn}(\zeta_i^k)\odot\max\{\mid\zeta_i^k\mid-\tau/(n\sigma),0\}.
\end{aligned}
\end{equation}

In light of above analysis, we are ready to state the iterative framework of ADMM, named FHADMM, for solving the fused lasso penalized adaptive Huber regression problem (\ref{admod2}) as follows.

\begin{algorithm}[H]
	\caption{ FHADMM }\label{alg:Framwork}
	\begin{algorithmic}[1]
		\Require
		Choose the robustification parameter $\tau>0$, the regularization parameters $\lambda_1$, $\lambda_2 >0$, the penalty parameter $\sigma > 0$ and step length $\pi \in (0, (1+\sqrt{5})/2)$.
		Choose an initial point $\boldsymbol z^0 \in \mathbb{R}^n$, $\boldsymbol \alpha^0 \in \mathbb{R}^p$, $\boldsymbol \gamma^0 \in \mathbb{R}^{p-1}$ 
		and initial multipliers $\boldsymbol \mu^0 = (\boldsymbol \mu_{z}^0~~\boldsymbol \mu_{\alpha}^0~~\boldsymbol \mu_{\gamma}^0)^\top \in \mathbb{R}^{n + 2p -1}$.  
		For $k = 0, 1, 2, \dots$, do the following steps iteratively.
		\Ensure
		Iterate until a certain  `stopping criterion' is met:
		\State update $\boldsymbol \beta^{k+1}$ by (\ref{beta});
		\State update $\boldsymbol \alpha^{k+1}$ by (\ref{alpha});
		\State update $\boldsymbol \gamma^{k+1}$ by (\ref{gamma});
		\State update $\boldsymbol z^{k+1}$ by (\ref{z3});
		\State update $\boldsymbol \mu^{k+1}$ by
		\begin{equation*}
		\begin{aligned}
		&\boldsymbol \mu_{z}^{k+1} 
		= \boldsymbol \mu_{z}^k + \pi\sigma(\boldsymbol z^{k+1} - \boldsymbol X\boldsymbol \beta^{k+1});\\
		&\boldsymbol \mu_{\alpha}^{k+1}
		= \boldsymbol \mu_{\alpha}^k + \pi\sigma(\boldsymbol \alpha^{k+1} - \boldsymbol \beta^{k+1});\\
		&\boldsymbol \mu_{\gamma}^{k+1} 
		= \boldsymbol \mu_{\gamma}^k + \pi\sigma(\boldsymbol \gamma^{k+1} - \boldsymbol D\boldsymbol \beta^{k+1});
		\end{aligned}
		\end{equation*}
		\Return $\boldsymbol \beta^{k+1}$.
	\end{algorithmic}
\end{algorithm}

In the experiment part, we fix the penalty parameter as $\sigma = 0.1$ for simplicity. Certainly, other techniques can be used so as to setting this value dynamically, e.g., Boyd et al. \cite{boyd2011distributed}.
Besides, we use an unit step length, i.e., $\pi = 1$, because of its extensive using in the algorithms for statistics learning. On the other hand, the convergence in the unit steplength case can be followed directly in the existing literature, such as Fazel et al. \cite{fazel2013hankel} and Yang \& Han \cite{yang2016linear}.
To make this part is easier to follow, we state its convergence result without proof because 
FHADMM is actually a standard ADMM with blocks $\boldsymbol \beta$ and $(\boldsymbol \alpha, \boldsymbol \gamma, \boldsymbol z)$. In summary, the convergence result of FHADMM can be described as follows. 
\begin{theorem}\label{th2}
Let $\{(\boldsymbol z^k, \boldsymbol \alpha^k, \boldsymbol \beta^k, \boldsymbol \gamma^k)\}$ be generated by FHADMM. 
If $\pi \in (0,(1+\sqrt{5})/2)$, then the sequence $\{(\boldsymbol z^k, \boldsymbol \alpha^k, \boldsymbol \beta^k, \boldsymbol \gamma^k)\}$ converges to an optimal solution $(\boldsymbol{\hat z}, \boldsymbol {\hat\alpha}, \boldsymbol {\hat\beta}, \boldsymbol {\hat\gamma)}$ to (\ref{admod1}) and $\{(\boldsymbol \mu_{z}^k, \boldsymbol \mu_{\alpha}^k, \boldsymbol \mu_{\gamma}^k)\}$ converges to an optimal solution $\boldsymbol{\hat\mu}$ to the dual problem of (\ref{admod1}).
\end{theorem}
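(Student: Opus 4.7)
The plan is to recast (\ref{admod2}) as an instance of the standard two-block convex program (\ref{standard}) and then invoke the classical convergence theorem for the two-block (semi-)proximal ADMM of Fazel et al., for which FHADMM is just a specialization with zero proximal terms and unit step length range $\pi\in(0,(1+\sqrt 5)/2)$.

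First, I would group the variables as $\boldsymbol x:=\boldsymbol \beta$ with $f(\boldsymbol \beta)\equiv 0$, and $\boldsymbol y:=\boldsymbol \theta=(\boldsymbol z,\boldsymbol \alpha,\boldsymbol \gamma)$ with
\[
g(\boldsymbol z,\boldsymbol \alpha,\boldsymbol \gamma) \;=\; \mathcal{L}_\tau(\boldsymbol y-\boldsymbol z)+\lambda_1\|\boldsymbol \alpha\|_1+\lambda_2\|\boldsymbol \gamma\|_1,
\]
so that the constraint in (\ref{admod2}) takes the form $\boldsymbol A\boldsymbol x+\boldsymbol B\boldsymbol y=\boldsymbol 0$ with $\boldsymbol A=\tilde{\boldsymbol X}$ and $\boldsymbol B=-\boldsymbol I$. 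I then verify the structural hypotheses required by the classical theorem: $f$ is obviously proper closed convex; $g$ is a separable sum of the Huber loss (smooth convex) and two scaled $\ell_1$-norms (closed convex), hence proper closed convex; $\boldsymbol B=-\boldsymbol I$ has full column rank; and $\tilde{\boldsymbol X}^\top\tilde{\boldsymbol X}=\boldsymbol X^\top\boldsymbol X+\boldsymbol I+\boldsymbol D^\top\boldsymbol D\succ 0$, which makes the $\boldsymbol \beta$-subproblem strongly convex and guarantees uniqueness of its minimizer (cf.\ (\ref{beta})). Existence of a saddle point of the Lagrangian (\ref{L1}) follows from Slater's condition (the constraint is linear and the feasible set is the whole space in $\boldsymbol \theta$) together with the coercivity of the objective along any direction in $\boldsymbol \beta$ induced by the $\lambda_1\|\boldsymbol \alpha\|_1$ term through the identity block in $\tilde{\boldsymbol X}$.

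Next, I would argue that FHADMM is literally the two-block ADMM applied to this reformulation. The $\boldsymbol \beta$-update (\ref{beta}) is the exact minimizer of $\mathcal{L}_\sigma$ over the first block. For the second block, because $g$ is separable in $(\boldsymbol z,\boldsymbol \alpha,\boldsymbol \gamma)$ and the coefficient matrix $-\boldsymbol I$ is block-diagonal in these variables, the joint minimization over $\boldsymbol \theta$ decouples exactly into the three independent subproblems (\ref{alpha}), (\ref{gamma}), (\ref{z3}); hence the sequential Gauss--Seidel order used in Algorithm~\ref{alg:Framwork} produces the same iterate as a simultaneous minimization of the augmented Lagrangian over the whole block. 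The multiplier update (\ref{submul}) is then the standard dual ascent step with relaxation factor $\pi$.

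Finally, I would apply the standard convergence theorem for the two-block semi-proximal ADMM (Fazel, Pong, Sun \& Tseng; see also Yang \& Han) with proximal terms set to zero: under the above hypotheses and for any $\pi\in(0,(1+\sqrt 5)/2)$, the primal iterates $\{(\boldsymbol z^k,\boldsymbol \alpha^k,\boldsymbol \beta^k,\boldsymbol \gamma^k)\}$ converge to an optimal solution of (\ref{admod1}) and $\{\boldsymbol \mu^k\}=\{(\boldsymbol \mu_z^k,\boldsymbol \mu_\alpha^k,\boldsymbol \mu_\gamma^k)\}$ converges to an optimal dual solution. The main (and essentially only) obstacle is bookkeeping: one must check carefully that the block-Gauss--Seidel decomposition of the $\boldsymbol \theta$-subproblem is truly equivalent to a single joint block update, so that the algorithm fits the hypotheses of the classical two-block theorem rather than being a genuine three-or-more-block ADMM (which need not converge for $\pi$ up to $(1+\sqrt 5)/2$). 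Once that equivalence is noted, no further technical work is needed beyond citing the known result.
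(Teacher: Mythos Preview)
Your proposal is correct and follows exactly the approach in the paper: recognize that FHADMM is the standard two-block ADMM applied to (\ref{admod2}) with blocks $\boldsymbol \beta$ and $\boldsymbol \theta=(\boldsymbol z,\boldsymbol \alpha,\boldsymbol \gamma)$, and then invoke \cite[Theorem~B1]{fazel2013hankel}. The paper itself gives no further detail than this citation, so your write-up is in fact more complete than the paper's own proof.
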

\begin{proof}
	See \cite[Theorem B1]{fazel2013hankel}. 
\end{proof}

At the end of this section, we state the stopping condition of Algorithm FHADMM.  
According to (\ref{kktP}), we use the KKT residuals to measure the quality of the derived solution, i.e.,
\begin{equation*}
Res := \max\{\phi_{\mu}, \phi_{z}, \phi_{\alpha}, \phi_{\beta}, \phi_{\gamma}\} < \text{Tol},
\end{equation*}
where
\begin{equation*}
\begin{aligned}
&\phi_{\mu} := \|\boldsymbol \theta - \tilde{\boldsymbol X}\boldsymbol \beta\|,\\
&\phi_{z} := \frac{\|\boldsymbol y - \boldsymbol z - \mathcal{P}_{\mathcal{L}_{\tau}}(\boldsymbol y - \boldsymbol z - \boldsymbol \mu_z)\|}{1 + \|\boldsymbol y - \boldsymbol z\| + \|\boldsymbol \mu_z\|},\\
&\phi_{\alpha} := \frac{\|\boldsymbol \alpha - \mathcal{P}_{\| \cdot \|_1}(\boldsymbol \alpha - \boldsymbol \mu_\alpha/\lambda_1)\|}{1 + \|\boldsymbol \alpha\| + \|\boldsymbol \mu_\alpha/\lambda_1\|},\\
&\phi_{\beta} := \|\tilde{\boldsymbol X}\boldsymbol \mu\|,\\
&\phi_{\gamma} := \frac{\|\boldsymbol \gamma - \mathcal{P}_{\| \cdot \|_1}(\boldsymbol \gamma - \boldsymbol \mu_\gamma/\lambda_2)\|}{1 + \|\boldsymbol \gamma\| + \|\boldsymbol \mu_\gamma/\lambda_2\|},
\end{aligned}
\end{equation*}
and `$\text{Tol}$' is a small error tolerance. 
In our experiments, we set $\text{Tol} = 10^{-3}$ which is illustrated to be enough to derive better quality estimations in experimental preparations. Besides,
if this stopping criterion can not meet within $2000$ number of iterations, we also force the iterative process of FHADMM terminate. In this case, we say this algorithm fails to the corresponding problem.

%%%%%%%%%%%%%%%%%%%%%%%%%%%%%%%%%%%%%%%%%%%%%%%%%%%%%%%%%%%%%%%%%%%%%%%%%%%%%%%%%%%%%%%%%%%%%%%%%%%%%%%%%%
\section{Numerical Studies on Synthetic Data}\label{sec5}
\setcounter{equation}{0}
In this section, we present some numerical studies by using a typical synthetic data to evaluate the performance of our proposal model in both low and high dimensions. 
All runs are performed on a laptop with Intel(R) Core(TM) i$7-9750$ CPU ($2.59$ GHz) and $8$ GB RAM.

For all of our numerical studies, $n$ denotes training set size, $n_{test}$ denotes test set size and $p$ denotes the number of features, or problem's dimension.
Each row of $X$ is generated by a normal distribution with mean zero and covariance matrix $\boldsymbol \Sigma$, where $\Sigma_{ij} = 0.5^{\mid i-j\mid}$ for $1 \leq i,j \leq p$. 
We simulate data with sparse and smooth vector $\boldsymbol \beta^* := (\boldsymbol \beta_1^*, \boldsymbol \beta_2^*)$, where $\boldsymbol \beta_1^* := (1,1,1,1,1,1, 2, 1.5,1.5,1.5,1.5)$ and $\boldsymbol \beta_2^* := (0,0,...,0)$ is a zero vector which the number of zero element is $p-12$. In this part, we consider the case of $p = 50$, $200$, and $400$.

The response variable $\boldsymbol y$ is generated according to $\boldsymbol y = \boldsymbol X\boldsymbol \beta^* + \boldsymbol \varepsilon$. 
We consider three different distributions of random noise $\boldsymbol \varepsilon \in \mathbb{R}^n$: \\
(i) The normal distribution $\mathcal{N}(0,0.05^2)$;\\
(ii) The $t$ distribution with degrees of freedom 1.5;\\
(iii) The lognormal distribution $\log\mathcal{N}(0, 2^2)$.

We note that performing the algorithm FHADMM involves a robustification parameter $\tau$ and a couple of regularization parameters $\lambda_1$ and $\lambda_2$.
For $\tau$, its traditional choice is $1.345$. Here,  to get a better estimation, we choose it dynamically as $\tau = a\sqrt{n/log(p)}$ with $a = \{0.4, 0.45, ..., 1.45, 1.5\}$. 
There are many techniques on the choice of $\lambda_1$ and $\lambda_2$, such as \cite{wang2007tuning} and \cite{jiao2015primal}.
Here we choose them by minimizing the estimation errors and prediction errors simultaneously. More details are ignored here because they beyond the scope of this paper.

In this part, we also do comparisons with other two typical estimation methods:\\ 
(i) The efficient fused lasso algorithm (EFLA): This algorithm was proposed by Liu et al. \cite{liu2010efficient} which is used for a fused lasso penalized least squares estimation. One key building block in EFLA is the Fused Lasso Signal Approximator (FLSA). The package of this algorithm is available at the website: \url{https://github.com/jaredhuling/fusedlasso}. Here, we name it as ``EFLA".\\
(ii) The $\tt fusedlasso$ package  proposed by Tibshirani \& Taylor \cite{tibshirani2011solution}. This solver aims to a fused lasso regularized least squares model via a dual path algorithm. Here, we name it as ``FLDP". 

To evaluate the performance of each method, we mearure the accuracy by the using of the Mean Squared Errors (MSE), which is defined as the difference between the estimated regression coefficients $\hat{\boldsymbol \beta}$ and the truth $\boldsymbol \beta^*$ under an $\ell_2$-norm. 
To visibly observe the performance of each method, we report the estimated coefficient $\hat{\boldsymbol \beta}$ $vs.$ the ground truth $\boldsymbol \beta^*$ in Figure \ref{simu_N}, Figure \ref{simu_t}, and Figure \ref{simu_L} in the sense that the normal distribution, the $t$ distribution, and the lognormal distribution on the noise $\boldsymbol \varepsilon$ are considered, respectively. Moreover, in each distribution case, we also consider three different dimensions, say $p=50$, $200$, and $400$, and report the results row by row in each figure.
The methods used in each test case are EFLA, FLDP, and FHADMM, and the results according to each method are listed column by column at each figure. 

To compare each method in a relatively fair way, we run the code $200$ times and record the average mean (MSE), the average standard error of the residuals (std), and the average computing time (CPU). The detailed numerical results are reported in Table \ref{table1} in which the results for best performance are marked in bold. 
It should be noted that the first column in Table \ref{table1} denotes the type of distribution of noise, and the second column denotes the names we concerned where the subscribe represents its dimension. From this table, we can derive the following conclusions:\\ 
(i) Our proposed method FHADMM has better performance for $t$ distribution and lognormal distribution, and is competitive with EFLA and FLDP for normal distribution, which means that our method performs better if heavy tailed noise contained.
We think this is not surprising because the least square model is widely known to have strong theoretical guarantees under Gaussian noise. 
We also see that, under different distribution cases, FHADMM performs a little better when $p=400$, which indicates that our proposed method is more suitable for solving higher dimensional problems. And particularly, under the lognormal distribution, our proposed method is a winner, which indicates that in the case of the data being not following symmetrical distribution and having long tails, our method is the best choice. 
\\
%算法稳定性
(ii) From the standard error of residual `std', we see that the values derived by our estimation method are always lower, and they are more stable in the sense that they changes slightly at each test case. 
For each method, we also see that when the samples contain more outliers, the values of the standard error of residual may increase at each noise case. 
\\
%计算时间
(iii) When turning our attention to the computing time, we see that our method requires the least time at the most test cases. We note that FLDP is the slowest especially in the high-dimensional case. The reason lies in that a matrix computation is needed at each iteration, which may takes the main computing burden. 
In contrast, EFLA seems a little better because a Nesterov's method is employed to produce an approximation solution per-iteration. In summary, the series of experiments demonstrate that our proposed method is the faster and highly efficient to estimate the coefficient in the heavily tailed data case.

\begin{table}[h]
	\begin{center}
		\begin{minipage}{174pt}
		\caption{The MSE, std and CPU results of EFLA, FLDP, and FHADMM methods}
		\label{table1}
		\begin{tabular}{@{}llllllll@{}}
			\toprule
			&&EFLA&FLDP&FHADMM \\
			\midrule
			$\mathcal{N}$&MSE$_{50}$  &0.201  &$\boldsymbol{5.555e}$-$\boldsymbol{6}$  &0.112\\
			&std$_{50}$   &1.595&$\boldsymbol{7.798e}$-$\boldsymbol{6}$&0.046\\
			&CPU$_{50}$   &$\boldsymbol{0.019}$&1.230&0.049\\
			&MSE$_{200}$ &$\boldsymbol{1.367}$\textbf{e}-$\boldsymbol{6}$&16.901&0.917\\
			&std$_{200}$   &47.579&1.735e+2&$\boldsymbol{0.122}$\\
			&CPU$_{200}$   &$\boldsymbol{0.083}$&26.488&0.199\\
			&MSE$_{400}$ &1.334&$\boldsymbol{1.223}$\textbf{e}-$\boldsymbol{5}$&2.562\\
			&std$_{400}$   &8.038&$\boldsymbol{1.814}$\textbf{e}-$\boldsymbol{5}$&0.752\\
			&CPU$_{400}$   &0.400&98.026&$\boldsymbol{0.315}$\\
			%\midrule
			$t$&MSE$_{50}$ &$\boldsymbol{1.195}$&57.713&7.256\\
			&std$_{50}$   &5.219e+3&4.756e+4&$\boldsymbol{3.318}$\\
			&CPU$_{50}$   &$\boldsymbol{0.009}$&1.249&0.057\\
			&MSE$_{200}$ &$\boldsymbol{0.290}$&2.838e+4&21.336\\
			&std$_{200}$   &2.443e+4&1.767e+5&$\boldsymbol{4.890}$\\
			&CPU$_{200}$   &0.265&30.312&$\boldsymbol{0.106}$\\
			&MSE$_{400}$ &4.068e+4&82.909&$\boldsymbol{29.312}$\\
			&std$_{400}$   &5.625e+5&7.336e+2&$\boldsymbol{4.764}$\\
			&CPU$_{400}$   &0.599&134.291&$\boldsymbol{0.282}$\\
			%\midrule
			$\log\mathcal{N}$&MSE$_{50}$  &2.752e+8&6.064e+2&$\boldsymbol{8.196}$\\
			&std$_{50}$   &4.565e+11&3.713e+3&$\boldsymbol{4.228}$\\
			&CPU$_{50}$   &$\boldsymbol{0.032}$&0.808&0.062\\
			&MSE$_{200}$ &9.165e+12&3.399e+6&$\boldsymbol{22.509}$\\
			&std$_{200}$   &2.459e+13&2.086e+7&$\boldsymbol{6.103}$\\
			&CPU$_{200}$   &1.612&65.449&$\boldsymbol{0.228}$\\
			&MSE$_{400}$ &8.423e+11&5.279e+2&$\boldsymbol{31.589}$\\
			&std$_{400}$   &6.291e+12&1.288e+2&$\boldsymbol{5.733}$\\
			&CPU$_{400}$   &4.087&128.800&$\boldsymbol{0.230}$\\
			\botrule
		\end{tabular}
	\end{minipage}
	\end{center}
\end{table}

\begin{figure}[H]
	\centering
	%\begin{minipage}[b]{0.9\linewidth}
		%\centering
		\includegraphics[scale=0.28]{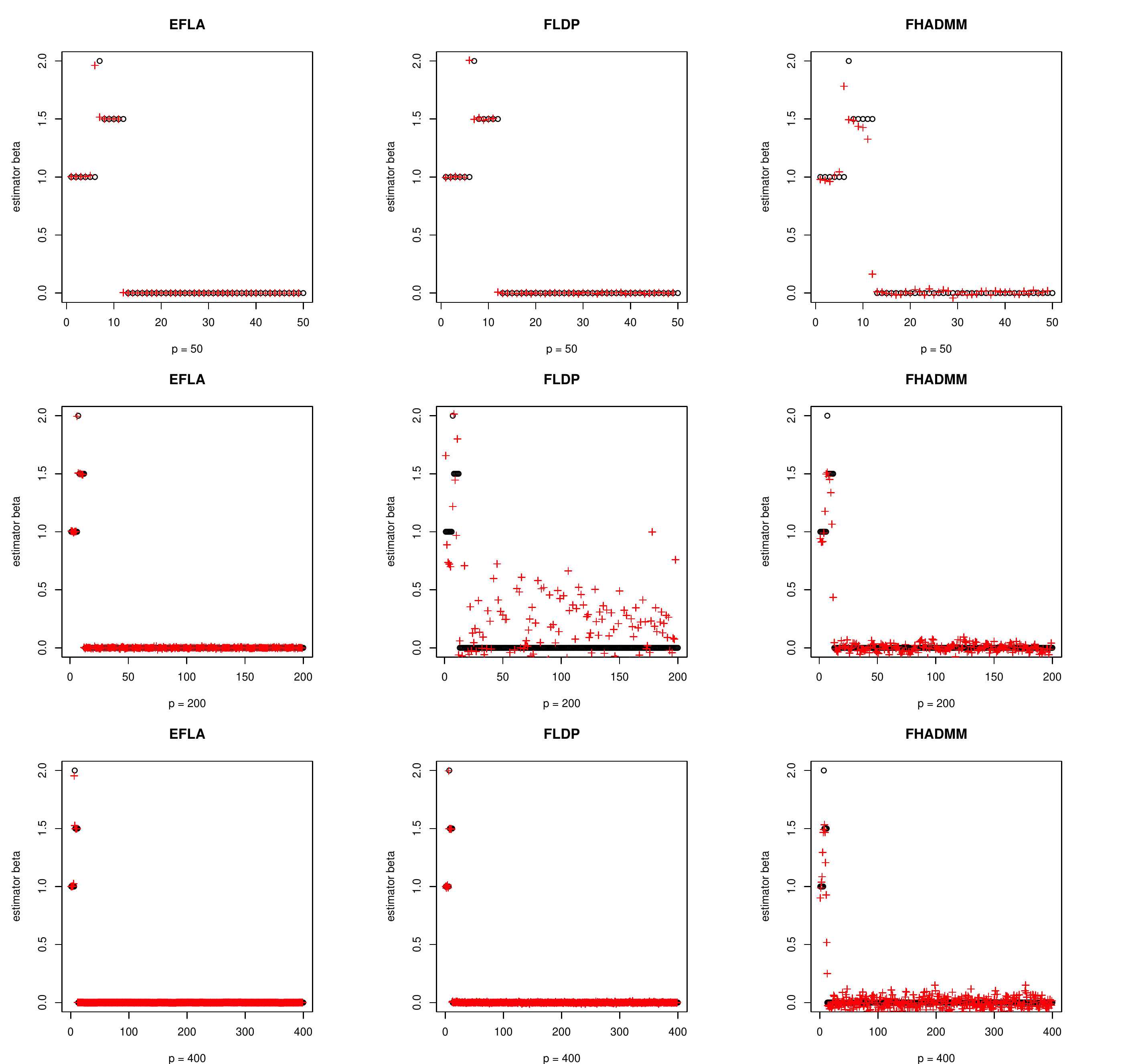}
	%\end{minipage}	
	\caption{The ground truth `$\circ$' $vs.$ the estimator `+' under normal distribution error. The results with dimensions of $p = 50$, $200$ and $400$ are shown from top to bottom,  and with methods of EFLA, FLDP, and FHADMM are shown from left to right}
	\label{simu_N}
\end{figure}

\begin{figure}[H]
	\centering
	%\begin{minipage}[b]{.9\linewidth}
		%\centering
		\includegraphics[scale=0.28]{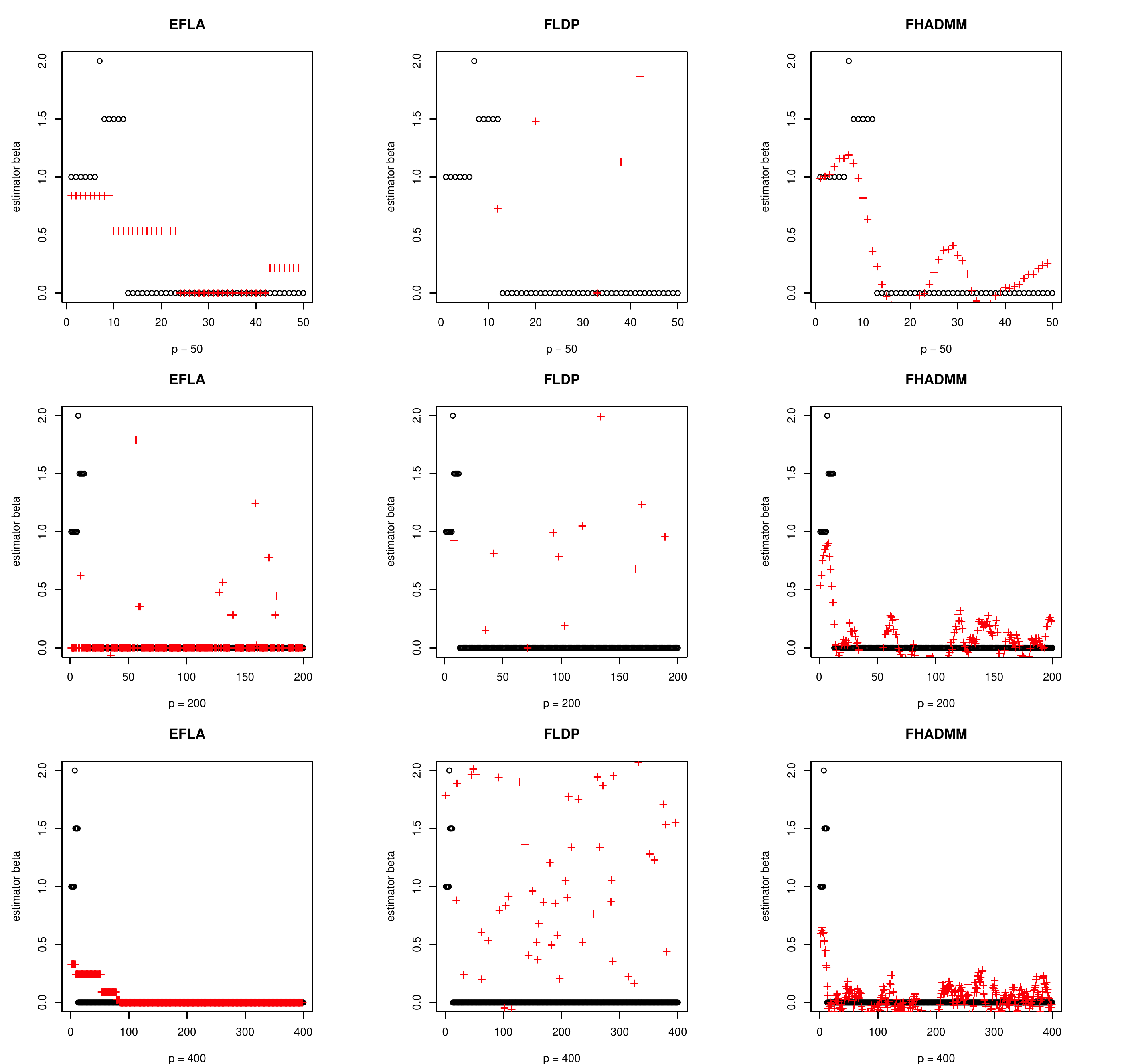}
	%\end{minipage}	
	\caption{The ground truth `$\circ$' $vs.$ the estimator `+' under $t$ distribution error. The results with dimensions of $p = 50$, $200$ and $400$ are shown from top to bottom,  and with methods of EFLA, FLDP, and FHADMM are shown from left to right}
	\label{simu_t}
\end{figure}

\begin{figure}[H]
	\centering
	%\begin{minipage}[b]{.9\linewidth}
		%\centering
		\includegraphics[scale=0.28]{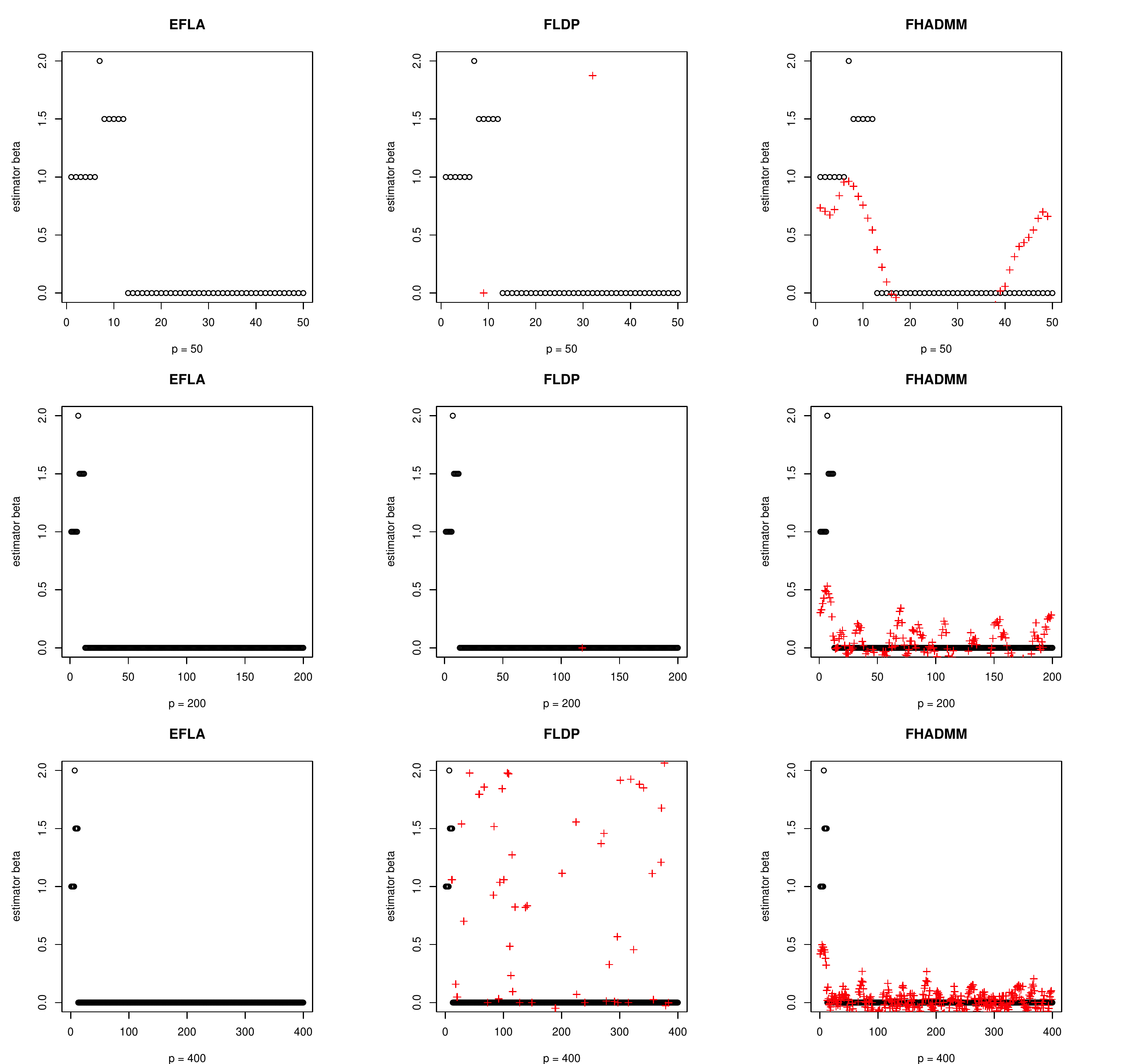}
	%\end{minipage}	
	\caption{The ground truth `$\circ$' $vs.$ the estimator `+' under lognormal distribution error. The results with dimensions of $p = 50$, $200$ and $400$ are shown from top to bottom,  and with methods of EFLA, FLDP, and FHADMM are shown from left to right}
	\label{simu_L}
\end{figure}

%%%%%%%%%%%%%%%%%%%%%%%%%%%%%%%%%%%%%%%%%%%%%%%%%%%%%%%%%%%%%%%%%%%%%%%%%%%%
\section{Numerical studies on real data}\label{sec6}
In this section, we further evaluate the effectiveness of our proposed estimation model and the progressiveness of the proposed algorithm FHADMM by using a triple of real datasets in the field of biology.
 
%%%%%%%%%%%%%%%%%%%%%%%%%%
\subsection{Leukemia Data}
The Leukemia data was introduced by Golub et al. \cite{golub1999molecular}, which is available at the website: \url{https://hastie.su.domains/CASI\_files/DATA/leukemia.html}. 
In this data set, there are $7129$ genes and $72$ samples where $47$ in class $1$ (acute lymphocytic leukemia) and $25$ in class $2$ (acute myelogenous leukemia).
In order to explain how the gene expression level affects the biological function, we use $3707$ genes among them.

The histogram of the kurtosises for these $3707$ genes is shown in Figure \ref{kur1}. 
It shows that, there are $1980$ out of $3707$ gene expression variables have kurtosises larger than $3$, and there are $213$ out of $3707$ larger than $9$.  
In other words, there are more than $99.6\%$ of the gene expression variables have tails heavier than the normal distribution, and there are about $24.9\%$ are severely heavy-tailed with tails flatter than the $t$ distribution with degrees of freedom $5$. 
This suggests that, the genomic data can still exhibit heavy tailedness regardless of any normalization methods, see Purdom \& Holmes \cite{elizabeth2005error}.

\begin{figure}[h]
	\centering
	\includegraphics[scale=0.3]{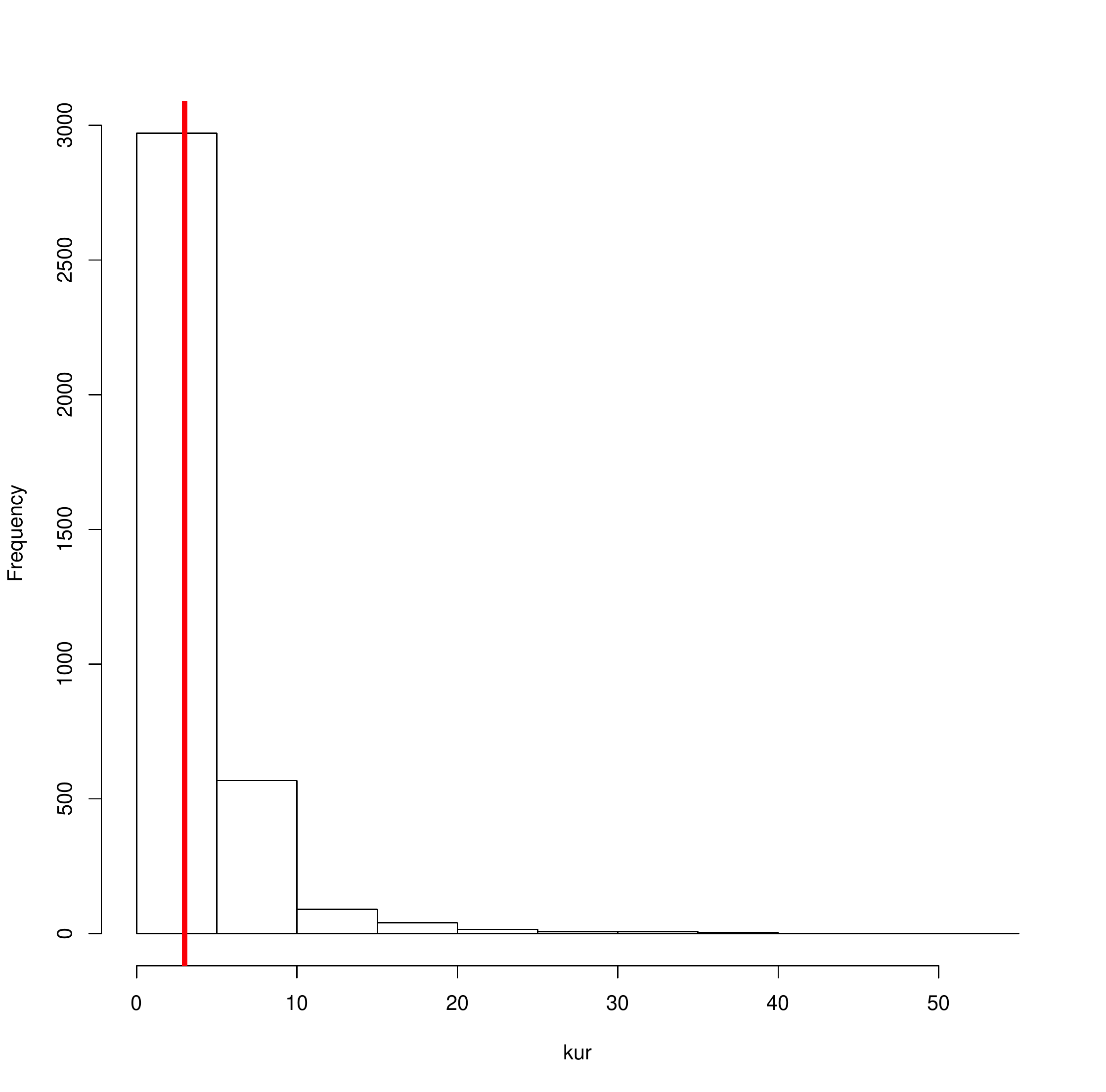}
	\caption{Histogram of Kurtosises for the leukemia gene. The thick line at $3$ is the kurtosis of a normal distribution}
	\label{kur1}
\end{figure}

We should note that there are unordered features in our data set. Therefore, we apply hierarchical clustering to order the gene expression variables so as to  exploring the property of fused lasso regularization. In this test, we divide the data set into a training data set with $n = 50$ and a test data set with $n_{test} = 22$.
To measure the predictive accuracy, we use the robust prediction loss named Mean Absolute Error (MAE) in the form of
$$
\text{MAE}(\hat{\boldsymbol \beta}) 
:= \frac{1}{n_{test}}\sum_{i=1}^{n_{test}}
\mid y_i^{test} - \left <\boldsymbol x_i^{test}, \hat{\boldsymbol \beta} \right >\mid,
$$
where $y_i^{test}$ and $\boldsymbol x_i^{test}$ , $i = 1,\ldots, n_{test}$, coming from the test data set, respectively.

%%%%%%%%%%%%%%%%%%%%%%%%%%%%%%%
\subsection{Liver Cancer Data}
The Liver cancer data was given by Wheeler et al. \cite{wheeler2017comprehensive} which is available at the website: \url{https://www.cancer.gov/about-nci/organization/ccg/research/structural-genomics/tcga}. In this data set, there are $19255$ genes and $116$ samples, i.e., $58$ in class $1$ (patient) and $58$ in class $2$ (health).
In order to find the genes with significant expression changes between their groups, we normalized the read-counts from the sequencing analysis.
By using the R package $\tt DESeq2$, we can obtain the DE result which contains log2Fold-Change and adjecent p-value.
After filtering by specific thresholds, that is, $padj < 0.01$ and $\mid log2FoldChange\mid > 1.5$, it can be got that the number of genes with obvious difference is $2597$.

The histograms about these $2597$ genes of the kurtosises is displayed at the left hand side of Figure \ref{kur2}.
It shows that, there are $709$ out of $2597$ gene expression variables have kurtosises larger than $3$, and there are $239$ out of $2597$ larger than $9$. 
Certainly, this data set also has heavy tailedness.

\begin{figure}[h]
	\centering
	\includegraphics[scale=0.3]{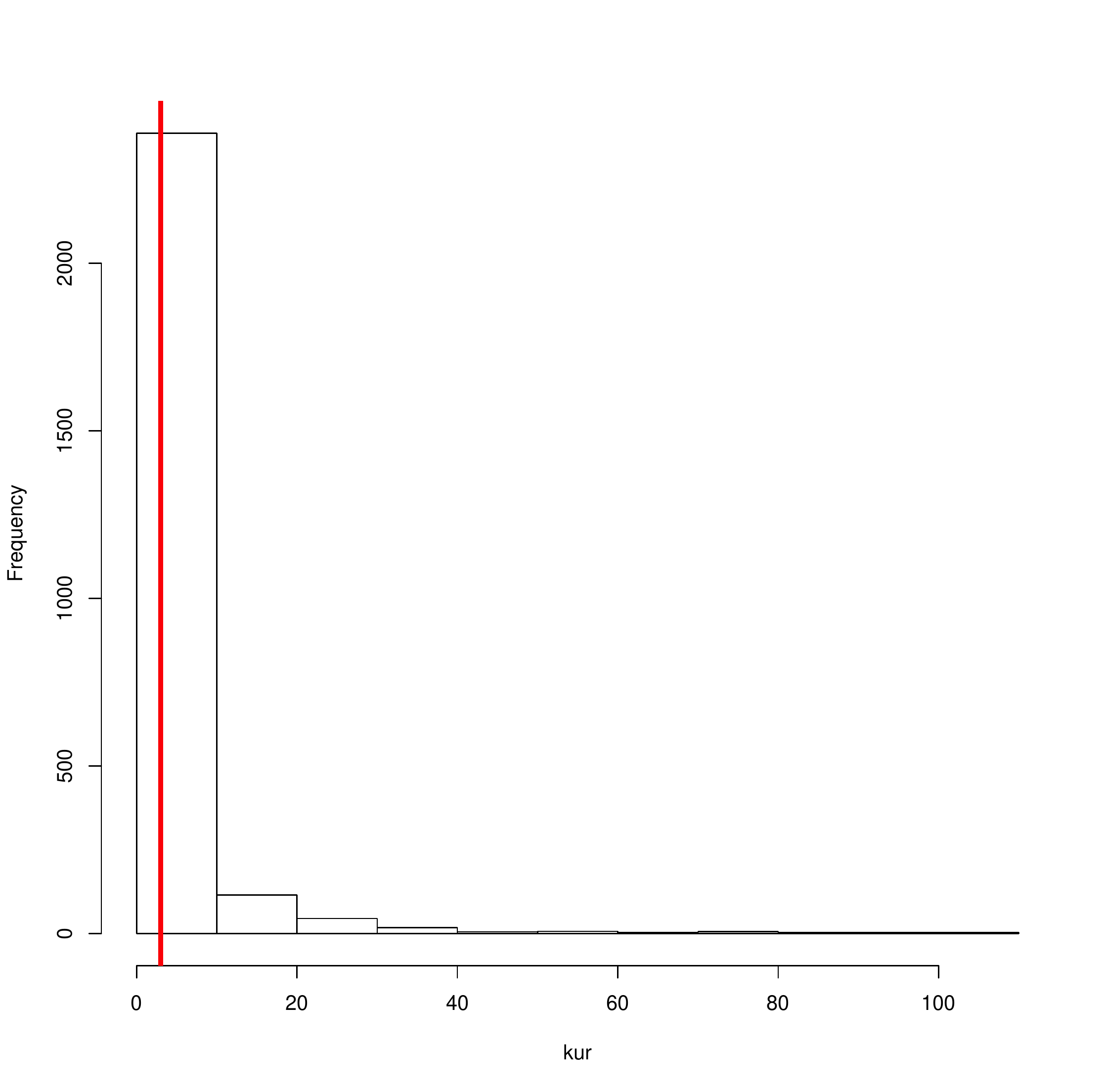}
	\caption{Histogram of Kurtosises for the liver cancer gene. The thick line at $3$ is the kurtosis of a normal distribution}
	\label{kur2}
\end{figure}

As before, we also use the hierarchical clustering to order the gene expression variables in this data set. 
In this test, this data set is divided into a training data set with $n = 81$ and a test data set with $n_{test} = 35$.
In addition, we also use the MAE to measure the algorithm's predictive performance.

%%%%%%%%%%%%%%%%%%%%%%%%%%%%%%%
\subsection{Bladder Cancer Data}
The Bladder cancer data was given in  \cite{cancer2014comprehensive} which can be download at the website \url{https://www.cancer.gov/about-nci/organization/ccg/research/structural-genomics/tcga}. 
In this data set, there are $19211$ genes and $40$ samples: $21$ in class $1$ (patients) and $19$ in class $2$ (health). 
In a similar way, we can find that the number of genes with obvious difference is $2541$.
The histograms about these $2541$ genes of the kurtosises of all expressions in the right hand side of Figure \ref{kur3}.
It can be observed that, there are $2520$ out of $2541$ gene expression variables have kurtosises larger than $3$, and there are $1494$ gene expression variables larger than $9$, which means that this data set is also heavy tailedness. 

\begin{figure}[h]
	\centering
	\includegraphics[scale=0.3]{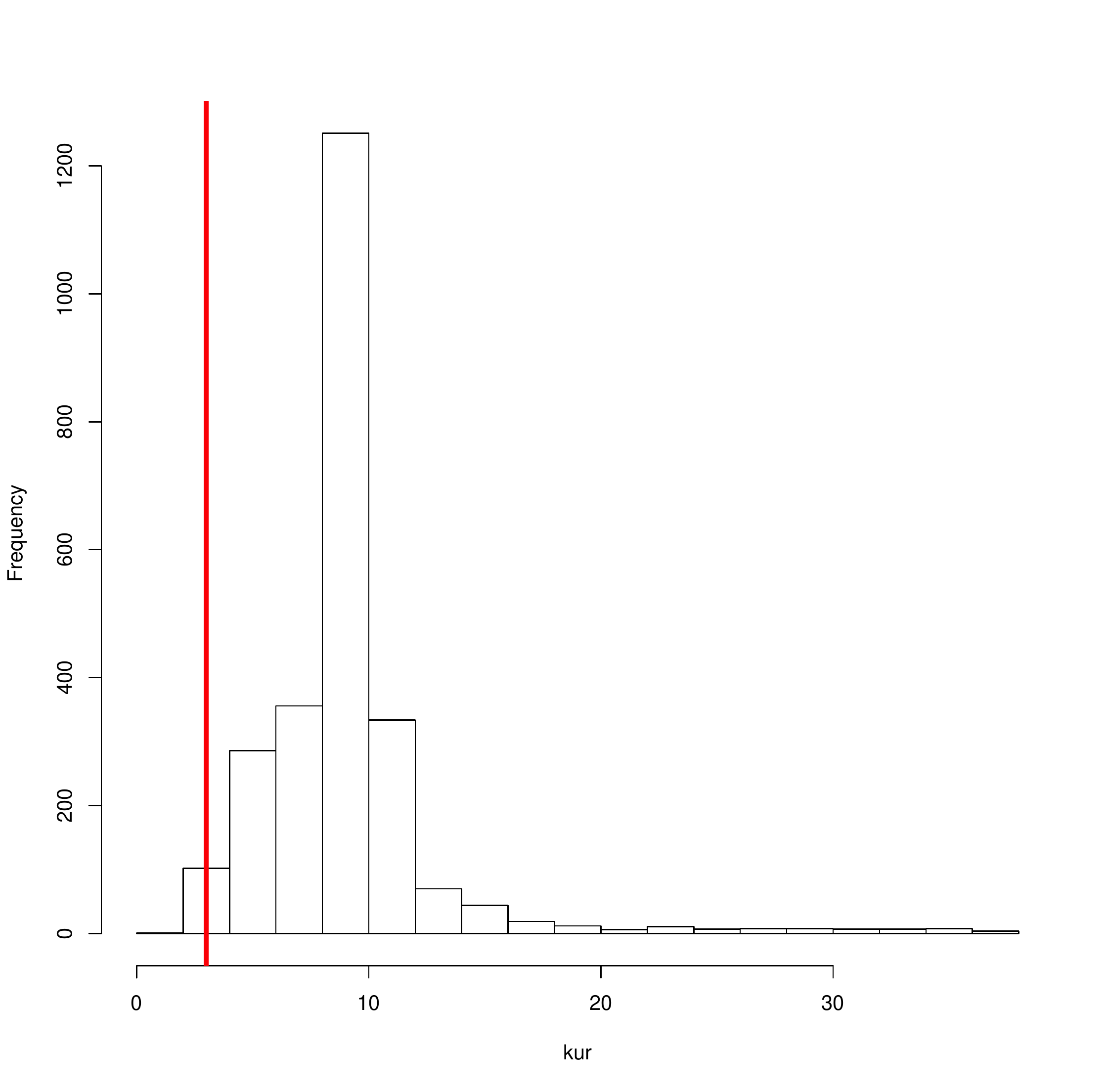}
	\caption{Histogram of Kurtosises for the bladder cancer gene. The thick line at $3$ is the kurtosis of a normal distribution}
	\label{kur3}
\end{figure}
In addition, we also apply hierarchical clustering to order the gene expression variables, and partition this data set into a training data set with $n = 28$ and a test data set with $n_{test} = 12$.

%%%%%%%%%%%%%%%%%%%%%%%%%%%%%%%
\subsection{Results' Comparisons}
We run the methods EFLA, FLDP, and FHADMM by the using of the three types of real datasets viewed above, and report the MAE values in Table \ref{table2}. From this table, we clearly observe that the MAE values derived by our proposed method FHADMM is the smallest, which is slightly smaller than the ones by FLDP and obviously smaller than the ones by EFLA. 
In summary, this simple table shows that the method FHADMM is the best but the FLDP is the worst. 

\begin{table}[h]
	\begin{center}
		\begin{minipage}{174pt}
		\caption{The MAE values of EFLA, FLDP, and FHADMM on three types of data sets}
		\label{table2}
		\begin{tabular}{@{}llll@{}}
			\toprule
			methods&Leukemia&Liver&Bladder\\
			\midrule
			EFLA&1.112&1.550&1.007\\			
			FLDP&0.944&5.029&0.960\\
			FHADMM&$\boldsymbol{0.928}$&$\boldsymbol{0.998}$&$\boldsymbol{0.779}$\\
			\botrule
		\end{tabular}
	\end{minipage}
	\end{center}
\end{table}

To evaluate the benefit of the Huber function, we compare our fused lasso penalized adaptive Huber regression model (\ref{admod})  with the fused lasso penalized least square model, i.e., the Huber function term in (\ref{admod}) is replaced by a least square. The numerical results of the methods EFLA, FLDP, and FHADMM by using the real data set `Leukemia', `Liver', and `Bladder' are report in Table \ref{table3}, \ref{table4}, and \ref{table5}, respectively. In these tables, we only display the non-zero fragments of the derived coefficients by each method based on our fused lasso penalized huber model (\ref{admod}) and the fused lasso penalized least square model. 

Comparing the values at the last column with the ones at the other columns, we see that, due to the influence of the coefficient difference constraint, it is preferable to scatter non-zero coefficients into the neighboring variables and obtain segmented smoothness solutions. 
While for the heavy tailedness, our model exhibits more robust than the model based on least square loss.

\begin{table}[h]
	\begin{center}
		\begin{minipage}{174pt}
		\caption{Parts of estimated coefficients derived by each method on Leukemia data}
		\label{table3}
		\begin{tabular}{@{}llll@{}}
			\toprule
			No. &EFLA &FLDP &FHADMM\\
			\midrule
			226  &0.00000 &0.00000 &0.00066 \\
			227  &0.00000 &0.00000 &0.00066 \\
			228  &0.00000 &0.00000 &0.00066 \\
			516  &0.00000 &0.00000 &0.00036 \\
			517  &0.00000 &0.02587 &0.00036 \\
			771  &0.00000 &0.00000 &0.00070 \\
			772  &0.00000 &0.00000 &0.00070 \\
			774  &0.00000 &0.02194 &0.00070 \\
			1808 &0.06693 &0.03654 &0.00061 \\
			1809 &0.00000 &0.01689 &0.00062 \\
			2053 &0.00000 &0.00000 &0.00059 \\
			2054 &0.00000 &0.00000 &0.00059 \\
			2109 &0.01079 &0.00000 &0.00071 \\
			2110 &0.01079 &-0.00455 &0.00071\\
			2911 &0.00000 &0.00000 &0.00096 \\
			2912 &0.00000 &0.00000 &0.00096 \\
			3048 &0.00000 &0.00000 &0.00051 \\
			3050 &0.00000 &0.00636 &0.00051 \\
			3168 &0.00000 &-0.05059 &-0.00024 \\
			3170 &0.00000 &0.00000 &-0.00024 \\
			\botrule
		\end{tabular}
	\end{minipage}
	\end{center}
\end{table}

\begin{table}[h]
	\begin{center}
		\begin{minipage}{174pt}
		\caption{Parts of estimated coefficients derived by each method on Liver data}
		\label{table4}
		\begin{tabular}{@{}llll@{}}
			\toprule
			No. &EFLA &FLDP &FHADMM \\
			\midrule
			7 &0.00000 &-2.35199 &0.00008 \\
			8 &-2.48732 &-2.38856 &0.00008\\
			108 &0.00000 &-1.26470 &0.00002 \\
			109 &0.00000 &-1.26470 &0.00002 \\
			121 &0.00000 &0.00000 &0.00009 \\
			122 &0.00000 &0.00000 &0.00009 \\
			173 &0.00000 &-3.69850 &0.00008 \\
			174 &0.00000 &0.00000 &0.00008 \\
			643 &0.00000 &0.97647 &0.00002 \\
			644 &0.00000 &0.00000 &0.00002 \\
			829 &0.00000 &0.00000 &0.00003 \\
			830 &0.00000 &0.73268 &0.00003 \\
			1090 &-0.00002 &0.00000 &0.00003\\
			1091 &0.00000 &0.00000 &0.00003 \\
			2267 &0.00000 &-3.42318 &0.00003 \\
			2268 &0.00000 &0.00000 &0.00003 \\
			2498 &0.00000 &0.00000 &0.00009 \\
			2499 &0.00000 &0.00000 &0.00009 \\
			2500 &0.00000 &0.00000 &0.00009 \\
			2501 &0.00000 &0.00000 &0.00009 \\
			\botrule
		\end{tabular}
	\end{minipage}
	\end{center}
\end{table}

\begin{table}[h]
	\begin{center}
		\begin{minipage}{174pt}
		\caption{Parts of estimated coefficients derived by each method on Bladder data}
		\label{table5}
		\begin{tabular}{@{}llll@{}}
			\toprule
			No. &EFLA &FLDP &FHADMM \\
			\midrule
			5 &0.00000 &0.03337 &0.00061\\
			6 &0.00000 &0.03337 &0.00061\\
			125 &0.00000 &0.01287 &0.00145\\
			126 &0.00000 &0.01287 &0.00145\\
			163 &0.00000 &0.00000 &0.00110\\
			164 &0.00000 &0.00000 &0.00110\\
			212 &0.00000 &0.05537 &0.00100\\
			213 &0.00000 &0.03898 &0.00100\\
			220 &0.00000 &0.00167 &0.00095\\
			221 &0.59375 &0.00167 &0.00095\\
			651 &0.00000 &0.00000 &0.00072\\
			652 &0.00000 &0.02055 &0.00072\\
			1581 &0.00000 &0.00000 &0.00075\\
			1582 &0.00000 &0.06349 &0.00075\\
			2165 &0.00000 &0.00000 &0.00055\\
			2166 &0.00000 &0.00000 &0.00055\\
			2167 &0.00000 &0.00000 &0.00055\\
			2321 &0.00000 &0.00000 &0.00056\\
			2322 &0.00000 &0.00000 &0.00056\\
			2323 &0.00000 &0.00000 &0.00056\\
			\botrule
		\end{tabular}
	\end{minipage}
	\end{center}
\end{table}

%%%%%%%%%%%%%%%%%%%%%%%%%%%%%%%%%%%%%%%%%%%%%%%%%%%%%%%%%%%%%%%%%%%%%%%%%%%%%
\section{Conclusion}\label{sec7}
In this paper, we focus on the fused lasso penalized adaptive Huber regression method. This method is widely used in many gene data sets because these datasets often have heavy tailedness and smoothness between adjacent genes.
In this paper, we studied the nonasympotic property of this method and gave an upper bound with a higher probability.
To implement this estimation method efficiently, we proposed an ADMM algorithm which has theoretical guarantees of global convergence in optimization literature.
In simulation studies, we showed that our estimation method is very efficient in the $t$ distribution noise case and the lognormal noise case. Especially, in a high-dimensional setting, our implemented algorithm FHADMM required less time than other state-of-the-art methods EFLA and FLDP. 

At the end of this paper, it should be list some concluding remarks. Firstly, it should be noted that there are many modified Huber loss functions which can be considered to replace the standard Huber function in (\ref{admod}), 
such as smooth non-convex Huber\cite{Zhong2012TrainingRS}, trimmed Huber\cite{chen2017robust} and so on.
This should be an interesting topic for further research. Secondly, other related penalties, such as the sparse group fused lasso for model segmentation \cite{degras2021sparse}, also deserves further investigating. Thirdly, from the optimization theory, it is general known that numerical optimization algorithms based on dual problem may poss more nice properties for algorithms' design. Hence, some higher efficient optimization algorithms based on dual formulation is worthy of developing.

\section*{Acknowledges}%\label{sec8}
This work of X. Xin is supported by Natural Science Foundation of Henan (Grant No. 202300410066). The work of Y. Xiao is supported by the National Natural Science Foundation of China (Grant No. 11971149).

%%%%%%%%%%%%%%%%%%%%%%%%%%%%%%%%%%%%%%%%%%%%%%%%%%%%%%%%%%%%%%%%%%%%%%%%%%%%%%%%%%%%%%%
\begin{appendices}
\section{}
\subsection{Proof of Lemma \ref{lem1}}
\begin{proof}
	First of all, we simply the notation $\boldsymbol H_\tau(\boldsymbol\beta) $ as $\boldsymbol H_\tau$ by ignoring the variable $\boldsymbol\beta$. 
	Without loss of generality, we normalize each column of $\boldsymbol X$ as $\|\boldsymbol x_i\|_\infty \leq 1$.
	For any $(\boldsymbol u, \boldsymbol \beta) \in \mathcal{C}(m,c_0, r)$, we have
	\begin{equation}\label{4.1}
	\begin{aligned}
	&\left <\boldsymbol u, \boldsymbol H_\tau \boldsymbol u \right > \\
	=& \boldsymbol u^\top\Big\{ \frac 1n \sum_{i=1}^{n}\boldsymbol x_i \boldsymbol x_i^\top \boldsymbol 1(\mid y_i - \left <\boldsymbol x_i, \boldsymbol \beta \right >\mid \leq \tau) \Big\}\boldsymbol u\\
	=& \|\boldsymbol S_n^{1/2}\boldsymbol u\|_2^2 - \boldsymbol u^\top\Big\{ \frac 1n \sum_{i=1}^{n}\boldsymbol x_i \boldsymbol x_i^\top \boldsymbol 1(\mid y_i - \left <\boldsymbol x_i, \boldsymbol \beta \right >\mid > \tau) \Big\}\boldsymbol u\\
	=& \|\boldsymbol S_n^{1/2}\boldsymbol u\|_2^2 \\
	&- \boldsymbol u^\top\Big\{ \frac 1n \sum_{i=1}^{n}\boldsymbol x_i \boldsymbol x_i^\top \boldsymbol 1(\mid y_i - \left <\boldsymbol x_i, \boldsymbol \beta - \boldsymbol \beta^* + \boldsymbol \beta^*\right >\mid > \tau) \Big\}\boldsymbol u\\
	\geq& \|\boldsymbol S_n^{1/2}\boldsymbol u\|_2^2 
	- \frac 1n \sum_{i=1}^{n} \left <\boldsymbol u, \boldsymbol x_i \right >^2 \boldsymbol 1(\mid \left <\boldsymbol x_i, \boldsymbol \beta - \boldsymbol \beta^*\right >\mid \geq \tau/2)\\
	&- \frac 1n \sum_{i=1}^{n} \left <\boldsymbol u, \boldsymbol x_i \right >^2 \boldsymbol 1(\mid\varepsilon_i\mid > \tau/2)\\
	\geq& \|\boldsymbol S_n^{1/2}\boldsymbol u\|_2^2 
	- \frac{2r}{\tau}\|\boldsymbol S_n^{1/2}\boldsymbol u\|_2^2\\
	&- \max\limits_{1 \leq i \leq n}\left <\boldsymbol u, \boldsymbol x_i \right >^2 \frac 1n \sum_{i=1}^{n} \boldsymbol 1(\mid\varepsilon_i\mid > \tau/2),
	\end{aligned}
	\end{equation}
	where
	\begin{align*}
	&\boldsymbol 1(\mid \left <\boldsymbol x_i, \boldsymbol \beta - \boldsymbol \beta^*\right >\mid \geq \tau/2)\\
	=& \boldsymbol 1(\frac{2}{\tau}\mid\left <\boldsymbol x_i, \boldsymbol \beta - \boldsymbol \beta^*\right >\mid \geq 1)\\
	\leq& \frac{2}{\tau}\mid\left <\boldsymbol x_i, \boldsymbol \beta - \boldsymbol \beta^*\right >\mid\\
	\leq& \frac{2}{\tau} \|\boldsymbol \beta - \boldsymbol \beta^*\|_1\\
	\leq& \frac{2r}{\tau}.
	\end{align*}
	
	Without loss of generality, we consider the special case $\|\boldsymbol u_J\|_2^2 = 1$. Moreover, for any $1\leq i\leq n$,  by using Holder's inequality, we have that
	\begin{equation}\label{4.2}
	\begin{aligned}
	\left <\boldsymbol u, \boldsymbol x_i \right > 
	\leq& \|\boldsymbol x_i\|_\infty\|\boldsymbol u\|_1 \\
	\leq& (1+c_0)\|\boldsymbol x_i\|_\infty\|\boldsymbol u_J\|_1 \\
	\leq& (1+c_0)\sqrt m\|\boldsymbol u_J\|_2
	= (1+c_0)\sqrt m.
	\end{aligned}
	\end{equation}
	
	In addition, for any $t > 0$ and $\tau > 0$, by using Markov's inequality, we have that
	$$
	\begin{aligned}
	&\mathbb{E}\Big( \frac 1n \sum_{i=1}^{n} 1(\mid\varepsilon_i\mid > \tau/2)\Big)\\
	=& \frac 1n \sum_{i=1}^{n}P(\mid\varepsilon_i\mid > \tau/2)\\
	=& \frac 1n \sum_{i=1}^{n}P(\mid\varepsilon_i\mid^{1+\delta} > (\tau/2)^{1+\delta})\\
	\leq& v_\delta (2/\tau)^{1+\delta}.
	\end{aligned}
	$$
	Furthermore, applying Hoeffding's inequality, it yields, with probability at least $1 - e^{-t}$, that
	\begin{equation}\label{4.3}
	\frac 1n \sum_{i=1}^{n} 1(\mid\varepsilon_i\mid > \tau/2) 
	\leq v_\delta (2/\tau)^{1+\delta} + \sqrt{t/(2n)}.
	\end{equation}
	
    Substituting (\ref{4.2}) and (\ref{4.3}) into (\ref{4.1}), we get
	$$
	\begin{aligned}
	&\left <\boldsymbol u, \boldsymbol H_\tau \boldsymbol u \right >\\ 
	\geq& \|\boldsymbol S_n^{1/2}\boldsymbol u\|_2^2 
	- \frac{2r}{\tau}\|\boldsymbol S_n^{1/2}\boldsymbol u\|_2^2\\
	&- (1+c_0)^2 m \Big(v_\delta (2/\tau)^{1+\delta} + \sqrt{t/(2n)} \Big).
	\end{aligned}
	$$
	Consequently, as long as $\tau \geq 8r$, we can get that the following inequality holds uniformly over $(\boldsymbol u, \boldsymbol \beta) \in \mathcal{C}(m, c_0, r)$ with probability at least $1 - e^{-t}$
	\begin{equation}
	\begin{aligned}
	&\left <\boldsymbol u, \boldsymbol H_\tau \boldsymbol u \right > \\
	\geq& \frac 34 \kappa_{low}
	- (1+c_0)^2 m \Big(v_\delta (2/\tau)^{1+\delta} + \sqrt{t/(2n)} \Big)\\
	\geq& \kappa_{low}/2,
	\end{aligned}
	\end{equation}
	whenever
	$$\tau \geq 2^{(4+\delta)/(1+\delta)}(1 + c_0)^{2/(1+\delta)}\kappa_{low}^{-1/(1+\delta)}(m v_\delta)^{1/(1+\delta)}$$
	and $n \geq 8(1 + c_0)^4 \kappa_{low}^{-2}m^2t.$
	On the other hand, it is a easy task to prove that $\left <\boldsymbol u, \boldsymbol H_\tau \boldsymbol u \right > \leq \kappa_{up}$. This completes the proof of the lemma.
	\verb| |
\end{proof}

\subsection{Proof of Lemma \ref{dsBd}}
\begin{proof}
	Let $Q(l) := D_\mathcal{L}(\boldsymbol \beta_l, \boldsymbol \beta^*) := \mathcal{L}_\tau(\boldsymbol \beta_l) - \mathcal{L}_\tau(\boldsymbol \beta^*) - \left < \nabla \mathcal{L}_\tau(\boldsymbol \beta^*), \boldsymbol \beta_l - \boldsymbol \beta^* \right >$. 
	Then we have 
	$$
	Q'(l) = \left < \nabla \mathcal{L}_\tau(\boldsymbol \beta_l) - \nabla \mathcal{L}_\tau(\boldsymbol \beta^*), \boldsymbol \beta - \boldsymbol \beta^* \right >.
	$$
	Subsequently, the symmetric Bregman divergence $D_\mathcal{L}^s(\boldsymbol \beta_l , \boldsymbol \beta^*)$ can be rewritten equivalently as
	$$
	\begin{aligned}
	D_\mathcal{L}^s(\boldsymbol \beta_l, \boldsymbol \beta^*) 
	=& \left < \nabla \mathcal{L}_\tau(\boldsymbol \beta_l) - \nabla \mathcal{L}_\tau(\boldsymbol \beta^*), l(\boldsymbol \beta - \boldsymbol \beta^*) \right >\\
	=& l Q'(l).
	\end{aligned}
	$$
	Setting $l = 1$, i.e., $D_\mathcal{L}^s(\boldsymbol \beta, \boldsymbol \beta^*) = Q'(1)$.
	
	$Q(l)$ is convex because of the convexity of $\mathcal{L}_\tau(\boldsymbol \beta_l)$ and $\left < \nabla \mathcal{L}_\tau(\boldsymbol \beta^*), \boldsymbol \beta_l - \boldsymbol \beta^* \right >$.
	Then its derivative  $Q'(l)$ is non-decreasing, which also indicates that
	$$
	D_\mathcal{L}^s(\boldsymbol \beta_l, \boldsymbol \beta^*) = lQ'(l)
	\leq
	lQ'(1) = lD_\mathcal{L}^s(\boldsymbol \beta, \boldsymbol \beta^*).
	$$
	\verb| |
\end{proof}

\subsection{Proof of Lemma \ref{RSC}}
\begin{proof}
	Recalling that
	$$
	D_\mathcal{L}^s(\boldsymbol \beta, \boldsymbol \beta^*) = \left < \nabla\mathcal{L}_\tau(\boldsymbol \beta) - \nabla\mathcal{L}_\tau(\boldsymbol \beta^*) , \boldsymbol \beta - \boldsymbol \beta^* \right >.
	$$
	Denote $\boldsymbol \Delta := \boldsymbol \beta - \boldsymbol \beta^*$. By the mean value theorem, we have
	$$
	D_\mathcal{L}^s(\boldsymbol \beta, \boldsymbol \beta^*) = \boldsymbol \Delta^\top \boldsymbol H_\tau(\tilde{\boldsymbol \beta}) \boldsymbol \Delta,
	$$
	where $\tilde{\boldsymbol \beta}$ lies between $\boldsymbol \beta^*$ and $\boldsymbol \beta$.
	Then we get
	$$
	D_\mathcal{L}^s(\boldsymbol \beta, \boldsymbol \beta^*) \geq \lambda_{\min} (\boldsymbol H_\tau(\tilde{\boldsymbol \beta})) \|\boldsymbol \beta - \boldsymbol \beta^*\|_2^2.
	$$
	It remains to show that $\lambda_{\min} (\boldsymbol H_\tau(\tilde{\boldsymbol \beta}))$ is lower bounded by a constant.
	There exists a $q \in [0,1]$ such that  $\tilde{\boldsymbol \beta} = q\boldsymbol \beta + (1 - q)\boldsymbol \beta^*$.
	Then it yields that
	$$
	\|\tilde {\boldsymbol \beta} - \boldsymbol \beta^*\|_1 
	= q\|\boldsymbol \beta - \boldsymbol \beta^*\|_1
	\leq qr,
	$$
	which means that $\tilde {\boldsymbol \beta} \in \mathcal{C}(m, c_0, r)$.
	By Lemma \ref{lem1}, we have $\lambda_{\min} (\boldsymbol H_\tau(\tilde{\boldsymbol \beta})) \geq \frac{\kappa_{low}}{2}$ with probability $1 - e^{-t}$. Hence, we have
	$$
	D_\mathcal{L}^s(\boldsymbol \beta, \boldsymbol \beta^*) \geq \frac{\kappa_{low}}{2}\|\boldsymbol \beta - \boldsymbol \beta^*\|_2^2.
	$$
	\verb| |
\end{proof}

\subsection{Proof of Lemma \ref{l1cone}}
\begin{proof}
	From the first-order optimality condition, we know that there exist $\tilde {\boldsymbol \omega}_1 \in \partial\|\hat {\boldsymbol \beta}\|_1$ and $\tilde {\boldsymbol \omega}_2 \in \partial\|\boldsymbol D \hat {\boldsymbol \beta}\|_1$ such that
	\begin{equation}\label{optcond}
	\nabla\mathcal{L}_\tau(\hat {\boldsymbol \beta}) + \lambda_1\tilde {\boldsymbol \omega}_1 + \lambda_2\tilde {\boldsymbol \omega}_2 = 0.
	\end{equation}
	From (\ref{sBd}), we have 
	\begin{equation}\label{sBd2}
	\left < \nabla \mathcal{L}_\tau(\hat{\boldsymbol \beta}) , \hat {\boldsymbol \beta} - \boldsymbol \beta^* \right > 
	\geq \left < \nabla \mathcal{L}_\tau(\boldsymbol \beta^*), \hat {\boldsymbol \beta} - \boldsymbol \beta^* \right >.
	\end{equation}
	Substituting (\ref{optcond}) into (\ref{sBd2}), we have
	\begin{equation}\label{threepart1}
	\begin{aligned}
	\left < \nabla\mathcal{L}_\tau(\boldsymbol \beta^*) , \hat {\boldsymbol \beta} - \boldsymbol \beta^* \right > 
	+& \lambda_1\left < \tilde {\boldsymbol \omega}_1, \hat{ \boldsymbol \beta} - \boldsymbol \beta^* \right >\\ 
	+& \lambda_2\left < \tilde {\boldsymbol \omega}_2, \hat {\boldsymbol \beta} - \boldsymbol \beta^* \right >\leq 0.
	\end{aligned}
	\end{equation}

    For simplicity, we use $I_1$, $I_2$, and $I_2$ to denote the first, the second, and the third term at the left-hand-side of (\ref{threepart1}). We now show their lower bound of $I_1$, $I_2$, and $I_2$ one by one. \\
	{\tt (i)} By Holder's inequality and the assumption that $\|\nabla \mathcal{L}_\tau(\boldsymbol \beta^*)\|_\infty \leq \lambda_1/2$, we have
	\begin{equation}\label{three11}
	I_1 
	\geq -\|\nabla \mathcal{L}_\tau(\boldsymbol \beta^*)\|_\infty\|\hat {\boldsymbol \beta} - \boldsymbol \beta^* \|_1
	\geq -\lambda_1/2\|\hat{ \boldsymbol \beta} - \boldsymbol \beta^* \|_1.
	\end{equation}
    {\tt(ii)} From the subgradient of $\ell_1$-norm, we have $\|\hat {\boldsymbol \beta}\|_1 = \left <\tilde {\boldsymbol \omega}_1, \hat {\boldsymbol \beta} \right >$, and that $\|\tilde {\boldsymbol \omega}_1\|_\infty \leq 1$. 
	Furthermore, it also holds that
	\begin{equation}\label{three12}
	\begin{aligned}
	I_2
	&= \lambda_1\left < \tilde{\boldsymbol \omega}_{1\mathcal{S}}, (\hat {\boldsymbol \beta} - \boldsymbol \beta^*)_{\mathcal{S}} \right > 
	+ \lambda_1\left < \tilde{\boldsymbol \omega}_{1\mathcal{S}^c}, (\hat {\boldsymbol \beta} - \boldsymbol \beta^*)_{\mathcal{S}^c} \right > \\
	&\geq -\lambda_1\|(\hat {\boldsymbol \beta} - \boldsymbol \beta^*)_{\mathcal{S}}\|_1 
	+ \lambda_1\left < \tilde{\boldsymbol \omega}_{1\mathcal{S}^c}, (\hat {\boldsymbol \beta} - \boldsymbol \beta^*)_{\mathcal{S}^c} \right > \\
	&\geq -\lambda_1\|(\hat {\boldsymbol \beta} - \boldsymbol \beta^*)_{\mathcal{S}}\|_1 
	+ \lambda_1\|(\hat {\boldsymbol \beta} - \boldsymbol \beta^*)_{\mathcal{S}^c}\|_1,  
	\end{aligned}
	\end{equation}
	where the last inequality follows from the fact that $\left < \tilde{\boldsymbol \omega}_{1\mathcal{S}^c}, \hat {\boldsymbol \beta}_{\mathcal{S}^c} \right > = \|\hat {\boldsymbol \beta}_{\mathcal{S}^c}\|_1$ and that $\boldsymbol \beta^*_{\mathcal{S}^c} = 0$.\\
	{\tt(iii)} In a similar way with {(ii)}, we get $\|\boldsymbol D\hat {\boldsymbol \beta}\|_1 = \left <\tilde {\boldsymbol \omega}_2, \hat {\boldsymbol \beta} \right >$, and that $\|\tilde {\boldsymbol \omega}_2\|_\infty \leq d$. 
	Furthermore, we have
	\begin{equation}\label{three13}
	\begin{aligned}
	I_3
	&= \lambda_2\left < \tilde{\boldsymbol \omega}_{2\mathcal{S}}, (\hat {\boldsymbol \beta} - \boldsymbol \beta^*)_{\mathcal{S}} \right > 
	+ \lambda_2\left < \tilde{\boldsymbol \omega}_{2\mathcal{S}^c}, (\hat {\boldsymbol \beta} - \boldsymbol \beta^*)_{\mathcal{S}^c} \right > \\
	&\geq -\lambda_2 d\|(\hat {\boldsymbol \beta} - \boldsymbol \beta^*)_{\mathcal{S}}\|_1 
	+ \lambda_2\left < \tilde{\boldsymbol \omega}_{2\mathcal{S}^c}, (\hat {\boldsymbol \beta} - \boldsymbol \beta^*)_{\mathcal{S}^c} \right > \\
	&\geq -\lambda_2 d\|(\hat {\boldsymbol \beta} - \boldsymbol \beta^*)_{\mathcal{S}}\|_1 
	+ \lambda_2 d\|(\hat {\boldsymbol \beta} - \boldsymbol \beta^*)_{\mathcal{S}^c}\|_1, 
	\end{aligned} 
	\end{equation}
	where the last inequality follows from the fact that $\left < \tilde{\boldsymbol \omega}_{2\mathcal{S}^c}, \boldsymbol D\hat {\boldsymbol \beta}_{\mathcal{S}^c} \right > = \|\boldsymbol D\hat {\boldsymbol \beta}_{\mathcal{S}^c}\|_1 = d\|\hat {\boldsymbol \beta}_{\mathcal{S}^c}\|_1$ and that $\boldsymbol \beta^*_{\mathcal{S}^c} = 0$.
	
	Substituting (\ref{three11}), (\ref{three12}), and (\ref{three13}) into (\ref{threepart1}), we get
	$$
	\begin{aligned}
	&-\lambda_1/2\|\hat {\boldsymbol \beta} - \boldsymbol \beta^* \|_1
	-\lambda_1\|(\hat {\boldsymbol \beta} - \boldsymbol \beta^*)_{\mathcal{S}}\|_1 
	+ \lambda_1\|(\hat {\boldsymbol \beta} - \boldsymbol \beta^*)_{\mathcal{S}^c}\|_1 \\
	&-\lambda_2 d\|(\hat {\boldsymbol \beta} - \boldsymbol \beta^*)_{\mathcal{S}}\|_1 + \lambda_2 d\|(\hat {\boldsymbol \beta} - \boldsymbol \beta^*)_{\mathcal{S}^c}\|_1 \leq 0,
	\end{aligned}
	$$
	or equivalently, 
	$$
	\|(\hat {\boldsymbol \beta} - \boldsymbol \beta^*)_{\mathcal{S}^c}\|_1 \leq \frac{2bd + 3}{2bd + 1} \|(\hat {\boldsymbol \beta} - \boldsymbol \beta^*)_{\mathcal{S}}\|_1.
	$$
	\verb| |
\end{proof}

\subsection{Proof of Theorem \ref{th1}}
\begin{proof}
	Using the first-order optimality condition as same as (\ref{optcond}), and substituting (\ref{optcond}) into (\ref{sBd}), we have 
	\begin{equation}\label{threepart2}
	\begin{aligned}
	D_\mathcal{L}^s(\hat \beta, \beta^*) 
	=& \left <-\nabla\mathcal{L}_\tau(\boldsymbol \beta^*) - \lambda_1\tilde {\boldsymbol \omega}_1 - \lambda_2\tilde {\boldsymbol \omega}_2, \hat{ \boldsymbol \beta} - \boldsymbol \beta^* \right >\\
	=& \left < \nabla\mathcal{L}_\tau(\boldsymbol \beta^*) , \boldsymbol \beta^* - \hat {\boldsymbol \beta} \right > 
	+ \lambda_1\left < \tilde {\boldsymbol \omega}_1, \boldsymbol \beta^* - \hat {\boldsymbol \beta} \right > \\
	&+ \lambda_2\left < \tilde {\boldsymbol \omega}_2, \boldsymbol \beta^* - \hat {\boldsymbol \beta} \right >.
	\end{aligned}
	\end{equation}
    Once again, for the sake of simplicity, we denote the right-hand-side terms at (\ref{threepart2}) as $I_1$, $I_2$, and $I_3$, respectively, and then turn our attention to their upper bounds.\\
{\tt(i)}	By Holder's inequality and Lemma \ref{l1cone}, we have
	\begin{equation}\label{three21}
	\begin{aligned}
	I_1 
	&\leq \|\nabla\mathcal{L}_\tau(\boldsymbol \beta^*)\|_\infty\|\hat {\boldsymbol \beta} - \boldsymbol \beta^* \|_1
	\leq \frac {\lambda_1} 2 \|\hat {\boldsymbol \beta} - \boldsymbol \beta^* \|_1\\
	&= \frac {\lambda_1} 2 \Big( \|(\hat {\boldsymbol \beta} - \boldsymbol \beta^*)_\mathcal{S} \|_1 + \|(\hat {\boldsymbol \beta} - \boldsymbol \beta^*)_{\mathcal{S}^c} \|_1\Big)\\
	&\leq \frac{2\lambda_1 (bd + 1)}{2bd + 1}\|(\hat {\boldsymbol \beta} - \boldsymbol \beta^*)_\mathcal{S} \|_1.
	\end{aligned}
	\end{equation}
{\tt(ii)} Using Holder's inequality and Lemma \ref{l1cone} again, we get
	\begin{equation}\label{three22}
	\begin{aligned}
	I_2 
	&\leq \lambda_1 \|\tilde {\boldsymbol \omega}_1\|_\infty\|\hat {\boldsymbol \beta} - \boldsymbol \beta^* \|_1
	\leq \lambda_1 \|\hat {\boldsymbol \beta} - \boldsymbol \beta^* \|_1\\
	&= \lambda_1 \Big( \|(\hat {\boldsymbol \beta} - \boldsymbol \beta^*)_\mathcal{S} \|_1 + \|(\hat {\boldsymbol \beta} - \boldsymbol \beta^*)_{\mathcal{S}^c} \|_1\Big)\\
	&\leq \frac{4\lambda_1 (bd + 1)}{2bd + 1}\|(\hat {\boldsymbol \beta} - \boldsymbol \beta^*)_\mathcal{S} \|_1.
	\end{aligned}
	\end{equation}
{\tt(iii)} For $I_3$, we have
	\begin{equation}\label{three23}
	\begin{aligned}
	I_3 
	&\leq \lambda_2 \|\tilde {\boldsymbol \omega}_2\|_\infty\|\hat {\boldsymbol \beta} - \boldsymbol \beta^* \|_1
	\leq \lambda_2d \|\hat {\boldsymbol \beta} - \boldsymbol \beta^* \|_1\\
	&= \lambda_2d \Big( \|(\hat {\boldsymbol \beta} - \boldsymbol \beta^*)_\mathcal{S} \|_1 + \|(\hat {\boldsymbol \beta} - \boldsymbol \beta^*)_{\mathcal{S}^c} \|_1\Big)\\
	&\leq \frac{4\lambda_2 d (bd + 1)}{2bd + 1}\|(\hat {\boldsymbol \beta} - \boldsymbol \beta^*)_\mathcal{S} \|_1.
	\end{aligned}
	\end{equation}
	
	Substituting (\ref{three21}), (\ref{three22}) and (\ref{three23}) into (\ref{threepart2}), it yields that
	\begin{equation}\label{D1}
	\begin{aligned}
	D_\mathcal{L}^s(\hat {\boldsymbol \beta}, \boldsymbol \beta^*) 
	\leq& \frac{2 \lambda_1 (2bd + 3)(bd + 1)}{2bd + 1}\|(\hat {\boldsymbol \beta} - \boldsymbol \beta^*)_\mathcal{S} \|_1\\
	\leq& \frac{2 \lambda_1 (2bd + 3)(bd + 1)}{2bd + 1}\sqrt{s}\|(\hat {\boldsymbol \beta} - \boldsymbol \beta^*)_\mathcal{S} \|_2,
	\end{aligned}
	\end{equation}
	where $s = \mid supp(\boldsymbol \beta^*)\mid$ is a sparsity parameter.
	
	In what follows, we employ Lemma \ref{RSC} to obtain a lower bound for the symmetric Bregman divergence. 
	At the first place, we denote $\hat {\boldsymbol \beta}_l := \boldsymbol \beta^* + l(\hat {\boldsymbol \beta} - \boldsymbol \beta^*), l \in (0,1]$ such that $\|\hat {\boldsymbol \beta}_l - \boldsymbol \beta^*\|_1 \leq r$ for some $r > 0$.
	
	In fact, if $\|\hat {\boldsymbol \beta} - \boldsymbol \beta^*\|_1 < r$, we can set $l = 1$ which means that $\hat {\boldsymbol \beta}_l = \hat {\boldsymbol \beta} $ and $\|\hat {\boldsymbol \beta}_l - \boldsymbol \beta^*\|_1 < r$;
	otherwise if $\|\hat {\boldsymbol \beta} - \boldsymbol \beta^*\|_1 \geq r$, we choose $l \in (0,1)$ such that $\|\hat {\boldsymbol \beta}_l - \boldsymbol \beta^*\|_1 = r$.
	
	Hence, $\hat {\boldsymbol \beta}_l$ falls into a local $\ell_1$ cone, i.e., $\hat {\boldsymbol \beta}_l \in \mathcal{C}(m, c_0, r)$. 
	Then by Lemma \ref{l1cone}, 
	\begin{equation}\label{betal}
	\begin{aligned}
	\|(\hat {\boldsymbol \beta}_l - \boldsymbol \beta^*)_{\mathcal{S}^c}\|_1 
	\leq& \frac{2bd + 3}{2bd + 1} \|(\hat {\boldsymbol \beta}_l - \boldsymbol \beta^*)_{\mathcal{S}}\|_1.
	\end{aligned}
	\end{equation}
	Then by Lemma \ref{RSC}, we have
	\begin{equation}\label{D2}
	D_\mathcal{L}^s(\hat {\boldsymbol \beta}_l , \boldsymbol \beta^*) \geq \frac{\kappa_{low}}{2}\|\hat {\boldsymbol \beta}_l - \boldsymbol \beta^*\|_2^2.
	\end{equation}
	By Lemma \ref{dsBd}, we have
	\begin{equation}\label{D3}
	D_\mathcal{L}^s(\hat {\boldsymbol \beta}_l , \boldsymbol \beta^*) \leq lD_\mathcal{L}^s(\hat {\boldsymbol \beta} , \boldsymbol \beta^*).
	\end{equation}
	Combining (\ref{D2}) and (\ref{D3}) wit (\ref{D1}), it yields that
	$$
	\|\hat {\boldsymbol \beta}_l - \boldsymbol \beta^*\|_2^2 
	\leq \frac{4 \lambda_1 (2bd + 3)(bd + 1)}{2bd + 1} \kappa_{low}^{-1} l\sqrt{s}\|\hat {\boldsymbol \beta} - \boldsymbol \beta^*\|_2.
	$$
	Because $\hat {\boldsymbol \beta} - \boldsymbol \beta^* = l^{-1}(\hat {\boldsymbol \beta}_l - \boldsymbol \beta^*)$, we have
	$$
	\begin{aligned}
	\|\hat {\boldsymbol \beta}_l - \boldsymbol \beta^*\|_2
	\leq&  \frac{4 \lambda_1 (2bd + 3)(bd + 1)}{2bd + 1} \kappa_{low}^{-1}\sqrt{sl}\\
	\leq&  \frac{4 \lambda_1 (2bd + 3)(bd + 1)}{2bd + 1} \kappa_{low}^{-1}\sqrt{s}.
	\end{aligned}
	$$
	Finally, by (\ref{betal}), we have
	$$
	\begin{aligned}
	\|\hat {\boldsymbol \beta}_l - \boldsymbol \beta^*\|_1 
	\leq& \frac{4(bd + 1)}{2bd + 1} \sqrt{s}\|(\hat {\boldsymbol \beta}_l - \boldsymbol \beta^*)\|_2\\
	\leq& \frac{16\lambda_1(bd + 1)^2(2bd + 3) }{(2bd + 1)^2}\kappa_{low}^{-1}s
	< r,
	\end{aligned}
	$$
	where the last inequality is from the assumption that $r \gtrsim \lambda_1\kappa_{low}^{-1}s$ and $n \geq c_3 m^2 t$ for a sufficiently large constant $c_3>0$.
	Because $\|\hat {\boldsymbol \beta}_l - \boldsymbol \beta^*\|_1 < r$, we have $\hat {\boldsymbol \beta}_l = \hat {\boldsymbol \beta}$, which means that
	\begin{equation}\label{pro1}
	\|\hat {\boldsymbol \beta} - \boldsymbol \beta^*\|_2
	\leq \frac{4 \lambda_1 (2bd + 3)(bd + 1)}{2bd + 1} \kappa_{low}^{-1}\sqrt{s} 
	\end{equation}
	holds with probability at least $1 - e^{-t}$.
	
	It remains to bound the probability so that the required condition $\|\nabla \mathcal{L}_\tau(\boldsymbol \beta^*)\|_\infty \leq \lambda_1/2$ in Lemma \ref{l1cone} holds.
	Following the argument used in the proof of  Sun et al. \cite[Theorem 1]{sun2020adaptive}, we take $\tau := \tau_0(n/t)^{1/(1+\delta)}$ for some $\tau_0 \geq \nu_{\delta}$
	and reach 
	$$
	P\{\|\nabla \mathcal{L}_\tau(\boldsymbol \beta^*)\|_\infty \geq 2\tau t/n\} \leq 2pe^{-t}.
	$$
	Hence, we have $\lambda_1/2 \geq 2\tau t/n$, that is $\lambda_1 \geq 4\tau_0(t/n)^{\delta/(1+\delta)}$. Then, together with (\ref{pro1}), we can prove that
	$$
	\|\hat{\boldsymbol \beta} - \boldsymbol \beta^*\|_2 
	\leq \lambda_1\kappa_{low}^{-1}\sqrt{s}
	$$
	holds with probability at least $1-(1+2p)e^{-t}$.
	\verb| |
\end{proof}

\end{appendices}

\bibliography{reference}
\bibliographystyle{mathphys}

\end{document}